\documentclass[10pt]{article}

\usepackage{amsmath, latexsym, amssymb, amscd, amsfonts, amsthm, cite, epsfig, hyperref, enumerate, graphics, caption, subcaption}

\usepackage{epstopdf}

\usepackage{todonotes}

\usepackage[UKenglish]{babel}

\usepackage[T1]{fontenc} 
\usepackage{eulervm}   
\usepackage{tgpagella} 

\usepackage[all]{xy}
\graphicspath{{figures/}}


\newcommand{\Tr}{\text{Tr}}							
\newcommand{\id}{\text{d}}						
\newcommand{\set}[1]{ \left\lbrace #1 \right\rbrace }	

\newcommand{\bra}[1]{\langle #1 \vert}
\newcommand{\ket}[1]{\vert #1 \rangle}

\newcommand{\be}{\begin{equation}}
\newcommand{\ee}{\end{equation}}
\newcommand{\inpr}[1]{\left\langle #1 \right\rangle}
\newcommand{\norm}[1]{\lVert #1 \rVert}
\newcommand{\abs}[1]{\lvert #1 \rvert}

\newcommand{\rss}{\rho_{{\rm ss}}}

\setlength{\parskip}{2mm}%
\setlength{\parindent}{0pt}

\begin{document}


\theoremstyle{definition}
\newtheorem{thm}{Theorem}
\newtheorem{defn}[thm]{Definition}
\newtheorem{lem}[thm]{Lemma}
\newtheorem{prop}[thm]{Proposition}

\newtheorem{col}[thm]{Corollary}

\title{Large Deviations, Central Limit and dynamical phase transitions in the atom maser}

\author{Federico Girotti$^1$, Merlijn van Horssen$^2$, Raffaella Carbone$^1$, M{\u a}d{\u a}lin Gu{\c t}{\u a}$^{2,}$\footnote{madalin.guta@nottingham.ac.uk}\\[4mm]
$^1$ 
Dipartimento di Matematica dell'Universit\`a di Pavia\\Via Ferrata 1, 27100 Pavia, Italy\\[2mm]
$^2$ University of Nottingham, School of Mathematical Sciences\\
University Park, NG7 2RD Nottingham, UK}

\maketitle

\begin{abstract}
The theory of quantum jump trajectories provides a new framework for understanding dynamical phase transitions in open systems. A candidate for such transitions is the  atom maser, which for certain parameters exhibits strong intermittency in the atom detection counts, and has a bistable  stationary state. Although previous numerical results suggested that the "free energy" may not be a smooth function, we show that  the atom detection counts satisfy a large deviations principle, and therefore we deal with a phase cross-over rather than a genuine phase transition. We argue however that the latter occurs in the limit of infinite pumping rate. As a corollary, we obtain the Central Limit Theorem for the counting process. The proof relies on the analysis of a certain deformed generator whose spectral bound is the limiting cumulant generating function. The latter is shown to be smooth, so that a large deviations principle holds by the G\"{a}rtner-Ellis Theorem. One of the main ingredients is the Krein-Rutman theory which extends the Perron-Frobenius theorem to a general class of positive compact semigroups. 

\end{abstract}

\newpage

\section{Introduction}


The last couple of decades have witnessed a revolution in the experimental realisation of quantum systems \cite{Haroche&Raimond}. Ultracold atomic gases are created and used for the study of complex many body phenomena such as quantum phase transitions \cite{Sachdev} shedding light on open problems in condensed-matter physics \cite{Bloch&Dalibard}.

Real quantum systems are "open" in the sense that they interact with their environment, which leads to an irreversible loss of coherence and to energy dissipation. In many cases, the dynamics can be well described by the Markov approximation in which the environment possesses no memory and interacts weakly with the system. The joint unitary evolution of the system and environment can be described through the input-output formalism \cite{Gardiner2004} using quantum stochastic calculus \cite{Parthasarathy1992}. In this framework, the Markov semigroup can be seen as the result of averaging over  stochastic quantum trajectories arising from continuous-time measurements performed in the environment. These are described via stochastic Schr\"{o}dinger (or filtering) equations \cite{Belavkin,Plenio1998} and capture the system's evolution conditional on the detection record. 


In  \cite{Garrahan2009} a new perspective was put forward, which looks at quantum jumps from the viewpoint of non-equilibrium statistical mechanics \cite{Evans&Franz}. 
Detection trajectories are seen as "configurations" of a stochastic system, and  large deviations theory \cite{Dembo2010,Ellis1995} is employed to study the \emph{dynamical phase transitions} arising in this way. Consider for simplicity the case of a counting measurement, which is directly relevant for the model studied in this paper. 
The interesting scenarios are that of a \emph{phase cross-over} in which the counting trajectories show intermittency between long active periods (many counts) and passive ones (few counts), and that of \emph{phase coexistence} where the counting process exhibit a mixture of infinitely long trajectories of either type. In the latter case, the asymptotic cumulant generating function (or "free energy") of the total counts process $\Lambda_t$ is singular at the origin, and the total counts do not obey a large deviations principle (LDP). In contrast, in a phase cross-over an LDP may hold but numerically and practically there would be a strong resemblance to an actual phase transition. 

For finite dimensional systems the counting process $\Lambda_t$ satisfies an LDP when the Markov dynamics is mixing, i.e. irreducible and aperiodic \cite{Hiai2007}. The proof uses the G\"{a}rtner-Ellis theorem according to which it suffices to prove the convergence of the cumulant generating function to a smooth limit. By the Markov property, the former can be expressed in terms 
of a certain "deformed generator" $\mathcal{L}_s$, and the existence of the limit
$$
\lim_{t\to\infty}\frac{1}{t} \log \mathbb{E}_\rho(e^{s\Lambda_t})  = \lim_{t\to\infty}\frac{1}{t} \log {\rm Tr}(\rho e^{t\mathcal{L}_s} (\mathbf{1}))=\lambda(s)
$$
follows from the spectral gap property of $\mathcal{L}_s$, where $\lambda(s)$ is the spectral bound of $\mathcal{L}_s$, and $\rho$ is the initial state.

In this paper we investigate the existence of dynamical phase transitions for the \emph{atom maser}, a well known quantum open system exhibiting interesting properties such as bistability and sub-Poissonian statistics \cite{Briegel1994,Englert1998,Walther&Varcoe}. The maser consists of a beam of excited atoms passing through a cavity with which they interact according to the Jaynes-Cummings model. After the interaction the atoms are measured in the standard basis and the trajectory of measurement outcomes is recorded. For certain values of the interaction strength, the stationary mean photon number changes abruptly (cf. Figure \ref{fig:stationary}), and the distribution is bistable, having a low and a high energy "phase". The measurement trajectories alternate between periods of low and high ground state atoms counts (cf. Figure \ref{fig:trajectories}), and its limiting moment generating function exhibits characteristic phase separation lines (cf. Figure \ref{fig:grids}).

Our main result (Theorem \ref{th.main}) is that the counts process satisfies the LDP, and therefore the atom maser does not have the non-analytic properties characteristic of phase transitions,  although it exhibits clear phase cross-over(s) which become sharper with increasing pumping rate. As a corollary, we obtain the Central Limit Theorem for the counting process, using a result of \cite{Bryc}. The proof follows the line of  \cite{Hiai2007}, but the novelty here is the treatment of an infinite dimensional system in continuous time dynamics. We use an $L^2$-representation \cite{Fagnola1994,Carbone2000} of the semigroup generated by $\mathcal{L}_s$ and show that the corresponding semigroup is compact. We then use the Krein-Rutman theory ( \cite{Chang2020} and references therein) to establish the uniqueness and strict positivity of the eigevector of $\lambda(s)$, and hence the existence of the spectral gap. Some steps of the proof rely on a special feature of the maser dynamics which allows us to restrict the attention to the commutative invariant algebra of diagonal operators. However the line of the proof is applicable to general infinite dimensional quantum Markov dynamics.


 For recent work on quantum dynamical phase transitions we refer to \cite{Garrahan2009,Garrahan2007,Hedges2009, Garrahan2008,Budini2010}. In particular, our investigation was motivated  by the numerical results of \cite{Garrahan2011} indicating a possible non-analytic behaviour of $\lambda(s)$. In \cite{Hiai2007} (see also \cite{Ogata2010}) a large deviation principle is shown to hold for correlated states on quantum spin chains; large deviations for quantum Markov semigroups are studied in \cite{Comman2008}. Metastable behaviour in a different atom maser has been investigated in \cite{Bruneau2008}. More broadly, there is a large body of large deviations work in quantum systems \cite{Hiai&Petz,Ogawa&Nagaoka,Lebowitz&Spohn,Lenci&Rey-Bellet,Derezinski&DeRoeck,Bjelakovic}.

In Section 
\ref{sec.background} we introduce the background of our problem: the atom 
maser and its Markov semigroup, the counting processes associated to the jump terms in the Lindblad generator, the static and dynamical phase transitions and the interplay between them, and the general setup of large deviations theory. In addition, the existence and properties of various semigroups are established rigorously. In Section \ref{sec.main}
we formulate the large deviations results and give a point by point outline of the proof. 
The results of a detailed numerical analysis are presented in Section \ref{sec.numerics}, where we argue that  "phase transitions" do occur in the limit of very large pumping rate, at $\alpha\approx 1$ (second order), at $\alpha\approx 6.66$ and further points (first order), where $\alpha$ is the pumping parameter (see Fig. \ref{fig:stationary} below).

\section{Background}
\label{sec.background}
In this section we introduce the atom maser dynamics, investigate the counting process associated to the measurement of outgoing atoms, and describe the basic elements of large deviations theory used in the paper. Propositions \ref{prop:def} and \ref{prop:defdef} establish the mathematical properties of the quantum dynamical semigroups used in the paper.

\subsection{The atom maser}

In the atom maser, two-level atoms pass successively through a cavity and interact resonantly with the electromagnetic field inside the cavity. The two-level atoms are identically and independently prepared  in the excited state, and for simplicity we assume that only a single atom passes through the cavity at any time. 
In addition, the cavity is also coupled to a thermal bath which represents the interaction between the (non-ideal) cavity and the environment. The combined effects of the interactions with the atoms and the environment changes the state of the cavity, which is described by a \emph{quantum Markov semigroup} in a certain coarse grained approximation described below (see \cite{Carbone2000} and ~\cite{Fagnola1994} for a mathematical overview, and~\cite{Englert2002} for the physical derivation of the master equation). 
In this section we give an intuitive description of the dynamics starting with a simplified discrete time model, with an emphasis on the statistics of measurements performed on the atoms. 


The cavity is described by a one mode continuous variable system with Hilbert space $\mathfrak{h} = \ell^{2}(\mathbb{N})$  whose canonical basis  vectors $(|e_{n}\rangle)_{n \geq 0}$ represent pure states of fixed number of photons. Therefore, if  $|\psi\rangle \in \mathfrak{h}$ is a pure state, the \emph{photon number distribution} of the cavity is given by $\abs{\inpr{e_{n}, \psi}}^{2}$. Mixed states are described by density operators, i.e. trace-class operators $\rho \in L^1(\mathfrak{h})$ which are  positive and normalised to have unit trace, and the observables are represented by self-adjoint elements of the von Neumann algebra of bounded operators $\mathcal{B}(\mathfrak{h})$ whose predual is $L^1(\mathfrak{h})$. 
Recall that the \emph{annihilation operator} $a$ on  $\mathfrak{h}$ is defined by
	\begin{equation*}
		a |e_{n}\rangle = \begin{cases} \sqrt{n} |e_{n-1}\rangle  &\text{if } n > 0  \\0 & \text{if } n = 0 \end{cases};
	\end{equation*}
its adjoint is the \emph{creation operator} $a^{*}$,
and $N = a^{*}a$ is the photon number operator such that $N|e_n\rangle  = n |e_n\rangle $. 
For every $\beta>0$, we introduce the following notation
\[D(N^\beta)=\left \{u=\sum_{n=0}^{+\infty} u_n |e_n\rangle: \sum_{n=0}^{+\infty} n^{2\beta} |u_n|^2<+\infty\right \}\]
for the domain of $N^\beta$ and we recall that $D(a)=D(a^*)=D(N^{\frac{1}{2}})$.
The atom is modelled by a two-dimensional Hilbert space 
$\mathbb{C}^{2}$ with standard orthonormal basis $\set{\ket{0},\ket{1}}$ consisting of the "ground" and "excited" states.  We denote by $\sigma^{*}$ and $\sigma$ the corresponding raising and lowering operators (i.e. $\sigma^{*} \ket{0} = \ket{1}$ etc.). The interaction between an atom and the cavity is described by the Jaynes-Cummings hamiltonian on 
$\mathbb{C}^2\otimes \mathfrak{h}$
	\begin{equation*}
		H_{\mbox{int}} = - g ( \sigma \otimes a^{*} + \sigma^{*} \otimes a),
	\end{equation*}
where $g$ is the coupling constant. 
The free hamiltonian is 
\begin{equation*}
		H_{\mbox{free}} = \omega \mathbf{1}\otimes   a^{*} a + \omega \sigma^{*} \sigma	\otimes \mathbf{1} ,
\end{equation*}
where $\omega$ is the frequency of the resonant mode; however by passing to the interaction picture the effect of the free evolution can be ignored. Therefore if the interaction lasts for a time $t_{0}$,  the joint evolution is described by the unitary operator $U:= \exp(-it_{0}H_{\mbox{int}})$ whose action on a product initial state is
	\begin{equation*}
		U :  \ket{1} \otimes \ket{k}  \mapsto \cos(\phi\sqrt{k+1}) \ket{1} \otimes \ket{k} + i \sin(\phi\sqrt{k+1})\ket{0}\otimes \ket{k+1} ,
	\end{equation*}
where $\phi:= t_{0} g$ is the \emph{accumulated Rabi angle}. If a measurement is performed on the outgoing atom in the standard basis, then the cavity remains in state $\ket{k}$ with probability $\cos^{2}(\phi\sqrt{k+1})$ or gains an excitation with probability $\sin^{2}(\phi\sqrt{k+1})$. By averaging over the outcomes, we obtain the cavity transfer operator 
$\mathcal{T}_{*}: L^1(\mathfrak{h})\to L^1(\mathfrak{h})$ 
	\begin{equation}\label{eq:transitionoperator}
		\mathcal{T}_{*}(\rho) = K_{1} \rho K_{1}^{\ast} + K_{2} \rho K_{2}^{\ast} = \mathcal{K}_{1*}(\rho) + \mathcal{K}_{2*}(\rho)
	\end{equation} 
where the \emph{Kraus operators} $K_{i}$ are given by
	\begin{equation*}
		\quad K_{1} = a^{*}\frac{\sin(\phi\sqrt{a a^{*}})}{\sqrt{a a^{*}}}, \quad K_{2} = cos(\phi\sqrt{a a^{*}}),
	\end{equation*}
and $\mathcal{K}_{i*}$ are the corresponding jump operators on the level of density matrices. Since each atom interacts with the cavity only once, the state of the cavity after $n$ such interactions is given by 
$\rho(n)= \mathcal{T}_{*}^{n}(\rho)$, which can be interpreted as a \emph{discrete time} quantum Markov dynamics. Let us imagine that after the interaction, each atom is measured in the standard basis and found to be either in the excited or the ground state. The master dynamics can be unravelled according to these events as follows
\begin{equation}\label{eq:discrete.unravel}
 \mathcal{T}_{*}^{n}(\rho) = \sum_{{\bf i} = (i_1,\dots ,i_n)} \mathcal{K}_{i_n*} \cdots \mathcal{K}_{i_1*} (\rho) 
\end{equation}
where each term of the sum represents the (unnormalised) state of the cavity after a certain sequence 
${\bf i}=(i_1,\dots ,i_n)\in\{0,1\}^n$ of measurement outcomes, whose probability is
$$
\mathbb{P}_\rho( i_1, \dots , i_n) = {\rm Tr} (\mathcal{K}_{i_n*} \cdots \mathcal{K}_{i_1*} (\rho) ).
$$ 
If $\Lambda_n({\bf i}):= \#\{ j: i_j= 0\}$ denotes the number of ground state atoms detected up to time $n$, we can use the previous relation to compute its moment generating function
\begin{equation}\label{eq:mgfdiscrete}
		\mathbb{E}_\rho\left(e^{s \Lambda_{n}}\right) = \sum_{k \geq 0} \mathbb{P}_\rho\left(\Lambda_{n} = k\right) e^{sk} = 
		\sum_{\bf i} e^{s\Lambda_n({\bf i})} {\rm Tr} (  \mathcal{K}_{i_n*} \cdots \mathcal{K}_{i_1*} (\rho) ) = {\rm Tr} ( \mathcal{T}_{*s}^n (\rho) )
\end{equation}
where 
	\begin{equation*}
		\mathcal{T}_{*s}(\rho) =e^{s} \mathcal{K}_{1*} (\rho)  +\mathcal{K}_{2*} (\rho)
	\end{equation*}
is a "deformed" transfer operator, i.e. a completely positive but not trace preserving map on $L^1(\mathfrak{h})$. The relation \eqref{eq:mgfdiscrete} and its continuous time analogue \eqref{eq:mgf} will be the key to analyse the large deviations properties of the counting process in terms of spectral properties of operators such as $\mathcal{T}_s$  and $\mathcal{L}_s$ below.	

To make the model more realistic we will pass to a continuous time description in which the incoming atoms are Poisson distributed in time with intensity $N_{\text{ex}}$, and the cavity is in contact with a thermal bath. If one ignores the details of short term cavity evolution, the discrete time dynamics can be replaced by coarse grained \emph{continuous time} Lindblad (master) equation \cite{Englert2002a}
	\begin{align}
		\frac{\id}{\id t} \rho(t) &= \mathcal{L}_{*}(\rho(t)),\nonumber\\
		\mathcal{L}_{*}(\rho) &= 
		\sum_{i=1}^{4} \left( L_{i} \rho L_{i}^{\ast} - \frac{1}{2}\lbrace L_{i}^{\ast} L_{i}, \rho \rbrace \right)\nonumber\\
		&=\sum_{i=1}^{4} L_{i} \rho L_{i}^{\ast}  + \mathcal{L}^{(0)}_{*} (\rho) = \sum_{i=1}^{4} \mathcal{J}_{i*} (\rho) + \mathcal{L}^{(0)}_{*} (\rho)
		\label{eq:lindblad}
	\end{align}
with jump operators $L_{i}$ defined by
	\begin{align}
		L_{1} &= \sqrt{N_{\text{ex}}} a^{*} \frac{\sin(\phi \sqrt{a a^{*}})}{\sqrt{a a^{*}}},\label{eq:L2}\\
		L_{2} &= \sqrt{N_{\text{ex}}} \cos(\phi \sqrt{a a^{*}}),\\
		L_{3} &= \sqrt{\nu+1}a,\\
		L_{4} &= \sqrt{\nu} a^{*}.\label{eq:L4}
	\end{align}
As before, the operators $L_{1}$ and $L_{2}$ are associated to the detection of an atom in the ground and excited state, respectively. The emission and absorption of photons due to contact with the bath is represented by operators $L_{3}$ and $L_{4}$, respectively. Between jumps the evolution is described by the semigroup $e^{t\mathcal{L}_*^{(0)}}(\rho):= e^{tG}(\rho)e^{tG}$ where
\be
\label{eq.def.G}
G:=-\frac{1}{2}\sum_{i=1}^4 L_i^*L_i=
-\frac{1}{2} \left(N_{\text{ex}}+\nu +(2\nu+1)N\right), \quad D(G)=D(N).
\ee

Since we deal with an infinite dimensional space and unbounded jump operators, the above definitions need to be formalised mathematically in order to insure existence and uniqueness of the different semigroups (see Proposition \ref{prop:def}). As it is customary in the theory of quantum dynamical semigroups with unbounded generator (\cite{Fagnola1999}), so far the generator $\mathcal{L}_*$ can be safely defined on the linear manifold generated by the operators $\ket{u}\bra{v}$ for $u,v \in D(G)$ (this manifold is also a core due to Proposition \ref{prop:def} and  \cite[Proposition 3.32]{Fagnola1999}) or, equivalently, we can interpret $\mathcal{L}$ applied to any $X \in \mathcal{B}(\mathfrak{h})$ as the sesquilinear form on $D(G)\times D(G)$ given by
\[\langle u, {\cal L}(X) v \rangle=\langle Gu,Xv\rangle + \langle u,XGv\rangle + \sum_{i=0}^4 \langle L_i u,XL_i v\rangle \quad \forall u,v \in D(G).
\]

\begin{defn}[\cite{Fagnola1999} and Section 3.1.2 in \cite{Bratteli1979}]
Let $\mathcal{B(\mathfrak{h})}$ be the space of bounded operators on $\mathfrak{h}$ 
endowed with the $w^*$- topology.
A quantum dynamical semigroup on $\mathcal{B(\mathfrak{h})}$ is a family $\mathcal{S} = (\mathcal{S}(t) )_{t\geq 0}$ of bounded operators on $\mathcal{B(\mathfrak{h})}$ with the following properties
\begin{itemize}
    \item[(i)]
$\mathcal{S}(0) = I$, 
\item[(ii)]
$\mathcal{S}(s+t) = \mathcal{S}(s)\mathcal{S}(t)$ for all $s,t\geq 0$,  

\item[(iii)] $\mathcal{S}(t)$ is completely positive for all $t\geq 0$,

\item[(iv)] $\mathcal{S}(t)$ is $w^*$-continuous operator on $\mathcal{B(\mathfrak{h})}$ for all $t\geq 0$,

\item[(v)] for each $X\in\mathcal{B(\mathfrak{h})}$, the map $t\mapsto \mathcal{S}(t)(X)$ is continuous with respect to the $w^*$-topology on $\mathcal{B(\mathfrak{h})}$.
\end{itemize}

The dynamical semigroup 
$\mathcal{S}(t)$ is called Markov (sub-Markov) if ${\cal S}(t)(\mathbf{1})=\mathbf{1}$ (${\cal S}(t)(\mathbf{1})\leq \mathbf{1}$) holds true for every time $t$.

\end{defn}

We recall that since the maps $\mathcal{S}(t)$ are positive, the fact that they are $w^*$-continuous is equivalent for them to be normal (\cite[Lemma 2.4.19 and Theorem 2.4.21]{Bratteli1979}). The $w^*$-generator $\mathcal{Z}$ is the operator defines as
	\begin{equation*}
		 \mathcal{Z}(X) := w^*-\lim_{h \downarrow 0} \tfrac{1}{h}\left( \mathcal{S}(h) (X) - X\right)
	\end{equation*}
for all $X \in D(\mathcal{Z}):= \left \{X \in \mathcal{B}(\mathfrak{h}): \exists \, w^*-\lim_{h \downarrow 0} \tfrac{1}{h}\left( \mathcal{S}(h) (X) - X\right) \right \}$, which is a $w^*$-dense linear space of $\mathcal{B}(\mathfrak{h})$.
Although no simple expression exists for the operators $\mathcal{S}(t)$ in terms of the generator $\mathcal{Z}$, it is helpful to think of $\mathcal{S}(t)$ as the exponential of the generator 
	\begin{equation}\label{eq:semigroup}
		\mathcal{S}(t) (X) = e^{t \mathcal{Z}} (X),
	\end{equation}
especially from the point of view of relating spectral properties of $\mathcal{Z}$ to those of $\mathcal{S}(t)$, e.g. \emph{spectral mapping theorems}. In Proposition \ref{prop:def} below we show that the Heisenberg picture Lindbladian $\mathcal{L}$ is the generator of a quantum Markov semigroup $(\mathcal{T}(t))_{t \geq 0}$ on $\mathcal{B}(\mathfrak{h})$; we postpone the proof of Proposition \ref{prop:def} to the Appendix \ref{app:profs.prop1&2}.
\begin{prop} \label{prop:def}
\begin{enumerate}
\item $\mathcal{L}$ generates a unique quantum Markov semigroup $\mathcal{T}(t)$ which has the following integral representation: for every $X \in \mathcal{B}(\mathfrak{h})$,
\be \label{eq:integral1}
\begin{split}
&\mathcal{T}(t)(X) = e^{t\mathcal{L}^{(0)}} (X)+\\
&+ \sum_{k\ge 1}\sum_{i_1,..i_k=1}^4  \int_{0\le t_1\le\cdots \le t_k\le t} X(t; t_1,i_1,\dots, t_k, i_k)
dt_1\dots dt_k\\
\end{split}
\ee
and
\[
X(t; t_1,i_1,\dots, t_k, i_k):= e^{(t-t_k)\mathcal{L}^{(0)}}{\mathcal J}_{i_k}\cdots  e^{(t_2-t_1)\mathcal{L}^{(0)}}{\mathcal J}_{i_1}e^{t_1\mathcal{L}^{(0)}}(X),
\]
where the equality is understood in terms of the associated bilinear form $\langle u, \mathcal{T}(t)(X)v\rangle$ for  $ u,v \in \mathfrak{h}$.
\item $\left(\mathcal{T}(t)\right)_{t \geq 0}$ has a unique faithful stationary state
	\begin{equation} \label{eq:stationary}
		\rho_{\text{ss}} := 
			  \rho_{\text{ss}}(0) \sum_{n \ge 0} \prod_{k=1}^{n}\left( \frac{\nu}{\nu +1} + \frac{N_{\text{ex}}}{\nu +1} \frac{\sin^{2}(\phi \sqrt{k})}{k} \right) |e_n \rangle \langle e_n|
	\end{equation}
with $\rho_{\text{ss}}(0)$ taken such that $\Tr (\rho_{\text{ss}}) =1$.
\item $\left(\mathcal{T}(t)\right)_{t \geq 0}$ is ergodic, in the sense that any initial state $\rho$ converges to the stationary state
$$
w-\lim_{t\to\infty} \mathcal{T}_{*}(t)(\rho)  = \rho_{ss}.
$$ 
\end{enumerate}
\end{prop}

\begin{figure}[t] 
    \centering
    \includegraphics[width=0.9\textwidth]{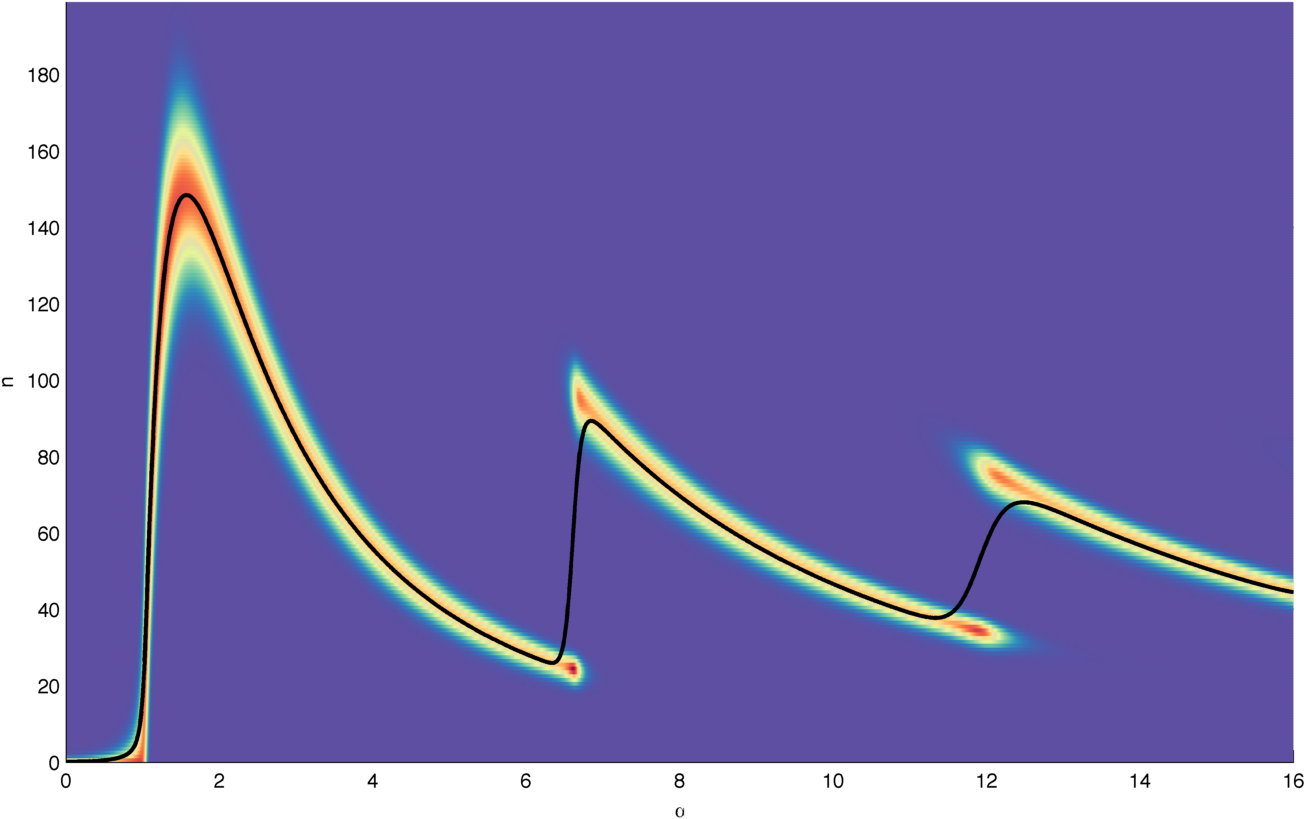}
    \caption{Mean photon number (black line) and photon number distribution (background) in the stationary state $\rho_{ss}$ as function of $\alpha = \sqrt{N_{\mbox{ex}}} \phi$}\label{fig:stationary}
\end{figure}

The dependence of the stationary mean photon number and photon number distribution on the "pumping parameter" 
$\alpha:= \sqrt{N_{\text{ex}  }}\phi$ is shown in Fig. \ref{fig:stationary}, for $\nu = 0.15$ and $N_{\text{ex}} = 150$. We note two interesting features in this figure: first, there is a sharp change in the mean photon number at $\alpha \approx 1$ followed by less pronounced jumps near $\alpha = 6.66 $ and $\alpha = 12$.
The other, related, feature to note is that the photon number distribution has a single peak for most values of $\alpha$  except in certain regions such as around the critical point $\alpha \approx 6.66$, where the stationary state has two local maxima. 
We will come back to these aspects in the next section and show that they are related to features of the counting trajectories such as intermittency which indicates proximity to a dynamical phase transition. The reason for plotting the stationary distribution in terms of $\alpha$ (with $N_{\text{ex}}$ fixed) rather that $\phi$ is because the transitions appear to occur at fixed values of $\alpha$ and sharpen as $N_{\text{ex}}\to \infty$. This will be investigated further in section \ref{sec.numerics}.

%

\subsection{The counting process and the deformed transition operator}

To better understand the behaviour of the stationary state illustrated in Figure \ref{fig:stationary}, we unravel the Markov semigroup $\mathcal{T}_*(t)$ with respect to the four counting processes associated to the jump terms (\ref{eq:L2} - \ref{eq:L4}), each of them corresponding to a counting measurement of the quantum output process. If $\rho$ is the initial state of the cavity, then $\rho(t):=\mathcal{T}_*(t)(\rho)$ is 
 the evolved state at time $t$ which (in analogy to Eq.\eqref{eq:discrete.unravel}) can be seen as an average over all possible counting events in the environment 
\begin{equation}\label{eq:unravel}
\begin{split}
\rho(t):=&\mathcal{T}_*(t)(\rho) = e^{t \mathcal{L}^{(0)}_{*}} (\rho)+\\
& \sum_{k\geq 1} \, \sum_{i_1,\dots i_k =1}^4 \int_{0\le t_1\le\cdots \le t_k\le t}
\rho(t; t_1,i_1\dots ,t_k, i_k) dt_1 \dots dt_k\\
\end{split}
\end{equation}
where the integrand
\begin{equation*}\label{eq.cond.state.unnorm}
\rho(t; t_1,i_1,\dots, t_k, i_k):= 
e^{(t-t_k) \mathcal{L}^{(0)}_{*}} \mathcal{J}_{i_{k*}} \cdots e^{(t_2- t_1) \mathcal{L}^{(0)}_{*}}  \mathcal{J}_{i_1*} e^{t_1 \mathcal{L}^{(0)}_{*}} (\rho),
\end{equation*}
is the unnormalised state of the cavity given that detections of type $i_1,\dots , i_k\in\{1,2,3,4\}$ have occurred at times 
$0\leq t_1\leq \dots \leq t_k\leq t$, and no other counting events happened in the meantime. Its trace 
is interpreted as the probability of observing the given measurement record.  Notice that equation (\ref{eq:unravel}) can be obtained by duality from equation (\ref{eq:integral1}), so this description is mathematically rigorous. Among the four counting 
processes we focus on the first one associated with the detection of an atom in the ground state and simultaneous absorption of a photon by the cavity. We denote by $\Lambda_t$ the \emph{total number} of such atoms detected up to time $t$: for every $n_1 \in \mathbb{N}$
\[{\mathbb P}_\rho(\Lambda_t=n_1)=\sum_{n_2,n_3,n_4 \ge 0}
\sum_{(***)}  
\int_{0\le t_1\le\cdots \le t_k\le t} 
{\rm tr}(\rho(t; t_1,i_1,\dots, t_k, i_k))\;
dt_1\dots dt_k
\]
where $(***)$ stands for $$\{i_1,..i_{n_1+\cdots +n_4}=1,\dots,4: \, \#\{k\,:\, i_k=j\}=n_j \,\forall\, j=1,\dots,4\}.
$$
Similarly to the discrete case, by using the above unravelling and point 3 in Proposition \ref{prop:defdef} below, we can show  that the moment generating function of $\Lambda_t$ is given by 
\begin{equation}\label{eq:mgf}
		\mathbb{E}_\rho\left(e^{s \Lambda_{t}}\right) =
		\Tr \left(  \mathcal{T}_{*s}(t)(\rho)\right)=
		  \Tr \left( \rho \mathcal{T}_s(t)(\mathbf{1})\right).
	\end{equation}
where $\left(\mathcal{T}_s(t)\right)_{t \geq 0}$ is the quantum dynamical semigroup on $\mathcal{B}(\mathfrak{h})$ with generator 
	\begin{equation}
		\mathcal{L}_{s}(X) = 
		 e^s \mathcal{J}_1(X) + \sum_{i=2}^{4} \mathcal{J}_j(X)+ \mathcal{L}^{(0)}(X)= 
		(e^s-1) \mathcal{J}_1(X)+  \mathcal{L}(X) ,\label{eq:perturbed}
	\end{equation}
and $\left(\mathcal{T}_{*s}(t)\right)_{t \geq 0}$ is the predual semigroup on $L^1(\mathfrak{h})$. 
This is formalised in the following Proposition whose proof can be found in Appendix \ref{app:profs.prop1&2}.
\begin{prop} \label{prop:defdef}
For all $s \in \mathbb{R}$, $\mathcal{L}_s$ generates a semigroup $\mathcal{T}_s = {(\mathcal{T}_s(t))}_{t\ge 0}$ such that
\begin{enumerate}
\item $\mathcal{T}_s(t)$ is a quantum dynamical semigroup.\\
\item $\mathcal{T}_s(t)$ is the unique solution to
\begin{equation} \label{eq:defsol}
\langle u,\mathcal{T}_s(t)(X) v \rangle=\langle e^{tG}u, X e^{tG}v\rangle+ \sum_{i=1}^4 \int_0^t \langle L^\prime_i e^{rG}u, \mathcal{T}_s(t-r)(X)L^\prime_i e^{rG}v \rangle dr
\end{equation}
for every $u,v \in \mathfrak{h}$, where $L^\prime_1=e^{s/2}L_1$ and $L^\prime_i=L_i$ for $i=2,3,4$.
\item $\mathcal{T}_s(t)$ has the integral representation
\begin{equation} \label{eq:intrepr}
\begin{split}
&\mathcal{T}_s(t)(X)= e^{t\mathcal{L}^{(0)}} (X) +
\\
& +\sum_{k\ge 1}\sum_{i_1,..i_k=1}^4  \int_{0\le t_1\le\cdots \le t_k\le t} X_s(t; t_1,i_1,\dots, t_k, i_k) dt_1\dots dt_k,
\end{split}
\end{equation}
for every $t\geq 0$, $X \in B(\mathfrak{h})$ and
\[
X_s(t; t_1,i_1,\dots, t_k, i_k):= e^{(t-t_k)\mathcal{L}^{(0)}}{\mathcal J}^\prime_{i_k}\cdots  e^{(t_2-t_1)\mathcal{L}^{(0)}}{\mathcal J}^\prime_{i_1}e^{t_1\mathcal{L}^{(0)}}(X),
\]
\sloppy where the equality has to be read for the associated bilinear form $\langle u, \mathcal{T}_s(t)(X)v\rangle$ for  $ u,v \in \mathfrak{h}$ and ${\cal J}^\prime_1=e^s{\mathcal J}_1$, 
 ${\cal J}^\prime_i={\cal J}_i$ for $i=2,3,4$.
\end{enumerate}
\end{prop}

Equation \eqref{eq:mgf} plays a central role in this paper; we will use it to formulate a large deviations principle for the counting process $\Lambda_{t}$, and in particular, to relate the moment generating function of $\Lambda_{t}$ to the spectral properties of  $\mathcal{L}_{s}$. 
Note that $\mathcal{L}_{s}$ differs from the Lindblad generator by the factor $e^s$ multiplying the jump term associated to the detection of a ground state atom. It is still the generator of a completely positive semigroup, but it is no longer identity preserving, and therefore does not represent a physical evolution except for $s=0$. 


\begin{figure}[t]
    \centering
    \includegraphics[width=0.7\textwidth]{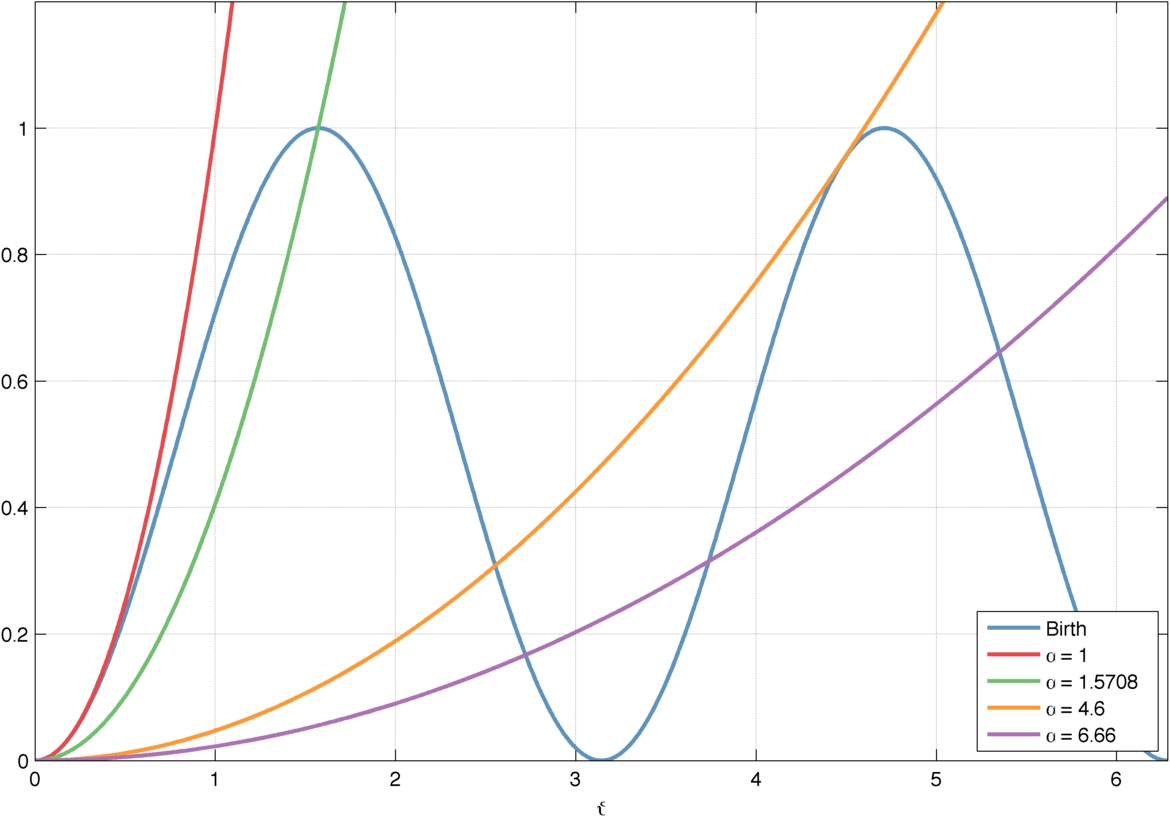}
    \caption{The birth (blue) and death rates as functions of $\vartheta$ for different values of $\alpha$. The intersection points correspond to minima and maxima of the stationary distribution.}\label{fig:b.d.rates}
\end{figure}

The unravelling \eqref{eq:unravel} allows for a classical 
interpretation of the cavity dynamics. Indeed, the semigroup generated by 
$\mathcal{L}$ (and $\mathcal{L}_s$) leaves invariant the commutative subalgebra $\mathcal{B}_d(\mathfrak{h})\subset \mathcal{B}(\mathfrak{h})$ generated by the number operator $N$, and the restriction of $\left(\mathcal{T}(t)\right)_{t \geq 0}$ to the diagonal algebra is the dynamical semigroup of a classical  \emph{birth-death process} on the state space $\{0, 1,2, \dots\}$, with rates 
\begin{eqnarray} \label{eqn:rates}
&&
\lambda_{k}^2:= N_{ex} \sin(\phi \sqrt{k+1})^{2} + \nu (k+1) ,\quad k\geq 0\nonumber \\
&&
\mu_{k}^2 := (\nu+1)k, \quad k\geq 1.\label{eq.birth-death}
\end{eqnarray}

Figure \ref{fig:b.d.rates} shows the birth and death rates (minus the common factor $\nu k$) rescaled by a factor $N_{ex}$, in the limit $N_{ex} \rightarrow \infty$, as functions of the parameter $\vartheta:= \sqrt{(k+1)/N_{ex}} \alpha= \phi\sqrt{k+1} $. In this regime the rates become the functions $\lambda^2_\vartheta = \sin(\theta)^2$ and $\mu^2_\vartheta= \alpha^{-2 }\vartheta^2$ of the \emph{continuous parameter} $\vartheta$, and we plot $\lambda^2_\vartheta$  along with $\mu^2_\vartheta$ for different values of $\alpha$. The intersection points correspond to minima and maxima of the stationary distribution \cite{Englert2002a} as suggested by the following argument. For $\alpha<1$ the death rate is always larger than the birth rate and the distribution is maximum at the vacuum state. For $1< \alpha<4.6$ there is a single non-trivial intersection point such that the birth rate is larger to its left and smaller to its right, and therefore corresponds to the maximum of the stationary distribution. Similarly, when $4.6<\alpha< 7.8$ the rates intersect in three points, the first and last are located at local maxima while the middle  point is a local minimum, so we deal with a bimodal distribution. 
However, while this analysis clarifies the emergence of multimodal distributions, it does not explain the sudden jump of the mean photon number at $\alpha\approx 6.66$, and higher values. 


This feature can be intuitively understood by appealing to the effective potential 
model \cite{Haroche&Raimond}. If we think of the photon number as a continuous variable and introduce  a fictitious potential $U$ defined by
\begin{equation}\label{eq.potential}
\rho_{ss}(n) = \rho_{ss}(0) e^{-U(n)}, 
\end{equation}
then the photon number distribution appears as the thermal equilibrium distribution of a particle moving in the potential $U$ (with $k_B \cdot T=1 $), see Figure \ref{fig:potentials}. When the potential has a single local minimum (for $0<\alpha<4.6$), the stationary distribution is unimodal and concentrates around this point. The cavity state fluctuates around the mean, and $\Lambda_t$ increases steadily with average rate. When there are two (or more) local minima of different height, the higher minimum  corresponds to a metastable phase from which the system eventually escapes due to thermal fluctuations. The rate of return to the metastable phase is typically much lower due to the larger potential barrier that needs to be climbed. The point $\alpha \approx 6.66$ where the two local minima are equal plays the role of a "phase transition", and corresponds roughly to the point where the mean photon number changes abruptly. Here the cavity spends long periods of time around the two local maxima with rare but quick transitions between them. The change from the low energy to the high energy mode is accompanied by a clear change in the slope of the counting process $\Lambda_t$.

\begin{figure}[t]
    \centering
        \begin{subfigure}[b]{0.95\textwidth}
          \centering
          \begin{subfigure}[b]{0.3\textwidth}
                  \centering
                  \includegraphics[width=\textwidth]{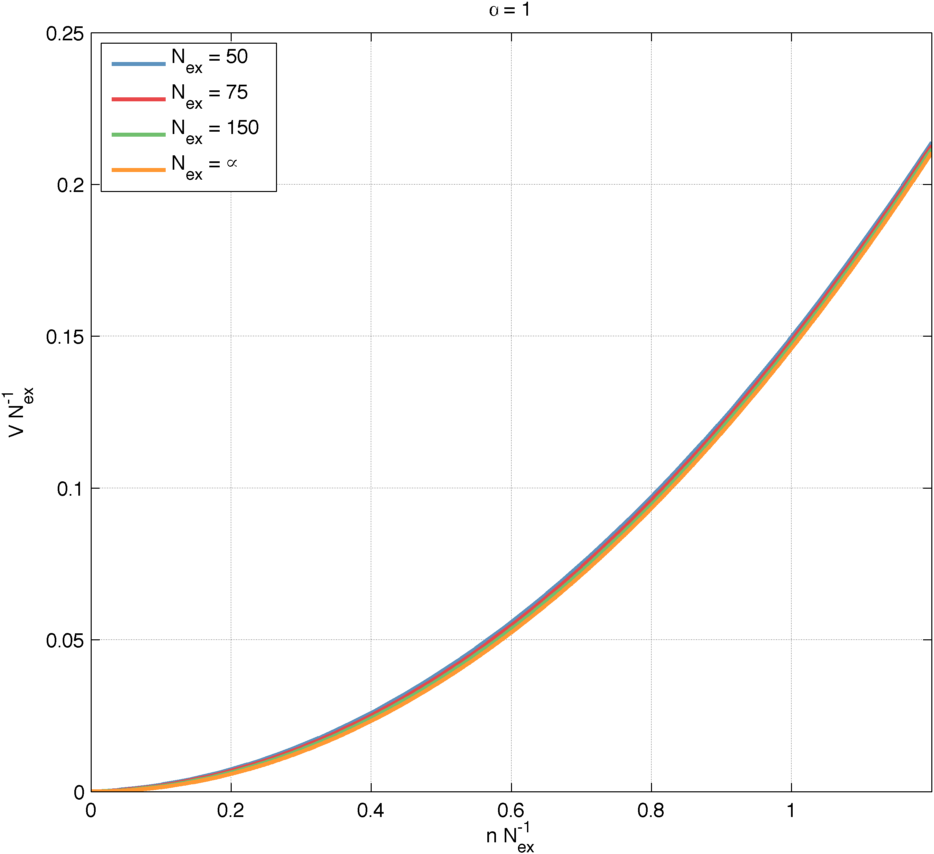}
          \end{subfigure}%
          \quad
                    \begin{subfigure}[b]{0.3\textwidth}
                  \centering
                  \includegraphics[width=\textwidth]{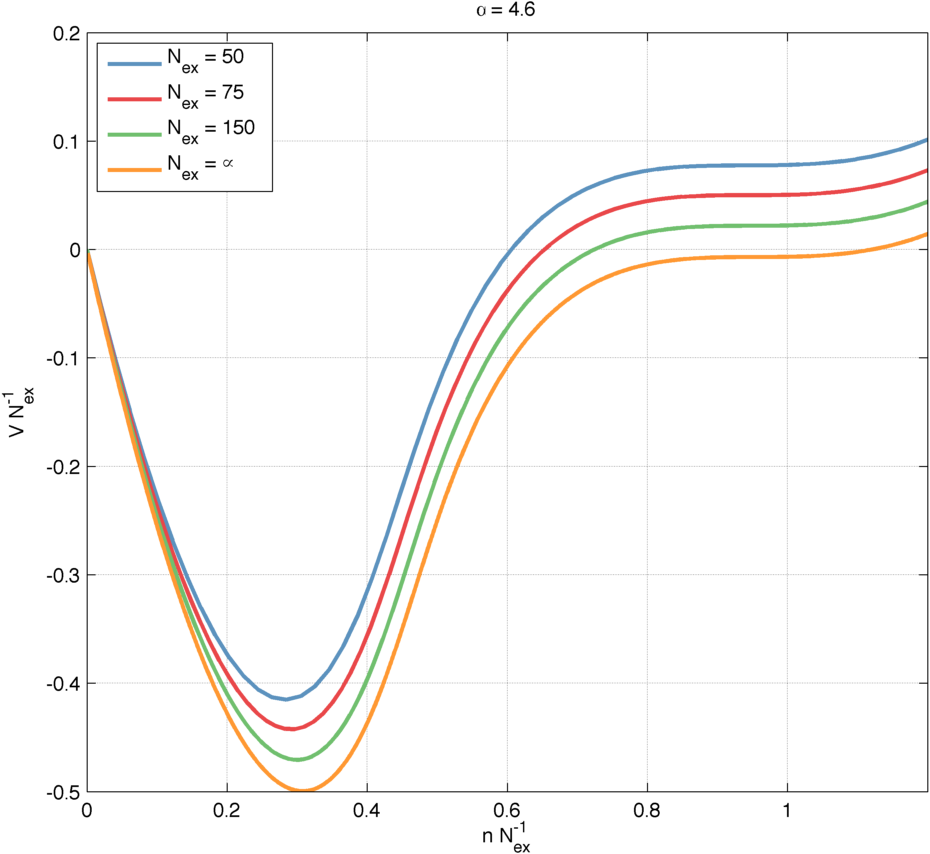}
          \end{subfigure}
          \quad
                              \begin{subfigure}[b]{0.3\textwidth}
                  \centering
                  \includegraphics[width=\textwidth]{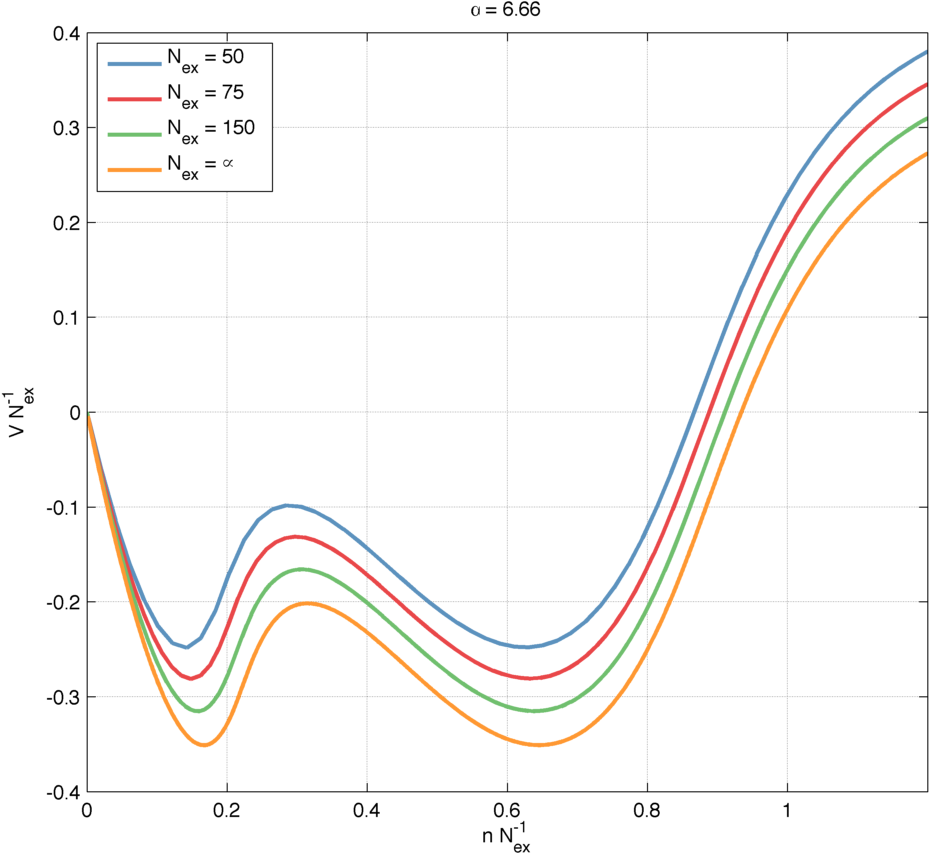}
          \end{subfigure}
    \end{subfigure}
    \caption{Rescaled potentials $U(n)/N_{\text{ex}}$ as function of $n/N_{\text{ex}}$, for various finite $N_{\text{ex}}$ converge to a limit potential for $N_{\text{ex}} \rightarrow \infty$. For $\alpha < 1$ the potential is minimum at zero; for  $1<\alpha <4.6$ it has a unique minimum away from $n=0$; for $4.6<\alpha< 7.8$ there are two local minima which become equal at $\alpha \approx 6.66$.}\label{fig:potentials}
\end{figure}

In the stationary regime the mean 
$\mathbb{E}_{\rss}( \Lambda_t)$ grows linearly with time with rate ${\rm Tr} (\rho_{ss} L_1^*L_1)$. This expression can be obtained by differentiating the moment generating function \eqref{eq:mgf} at $s=0$
\begin{eqnarray*}
\frac{\mathbb{E}_{\rho}( \Lambda_t)}{t} &=&
\frac{1}{t}\left.\frac{d}{ds}\mathbb{E}_{\rho}(e^{s\Lambda_t})\right|_{s=0} = 
\frac{1}{t}\left.\frac{d}{ds}
{\rm Tr}(\rho \mathcal{T}_s (t)(\mathbf{1}))
\right|_{s=0}\\
&=& \frac{1}{t}\left. \int_{u=0}^t du
{\rm Tr}(\rho 
\mathcal{T}_s (u) \circ \mathcal{J}_1 \circ 
\mathcal{T}_s (t-u) (\mathbf{1}))\right|_{s=0}\\
&=&
\frac{1}{t} \int_{u=0}^t du
{\rm Tr}(
\mathcal{T}_*(u)(\rho)  L_1^* L_1)
\end{eqnarray*}
where we used the fact that $\left. \frac{d \mathcal{L}_s}{ds}  \right |_{s=0} = \mathcal{J}_1$, cf. \eqref{eq:perturbed}. The rate is then obtained by taking $t\to \infty$ and using the fact that $\rho$ converges to the stationary state.

Using the property of the birth-death process
$\sum_n \rss(n) (\lambda^2_n- \mu^2_n)=0$ we can further write the rate as 
\be \label{eq.statmean}
\frac{\mathbb{E}_{\rss} ( \Lambda_t) }{t} =N_{ex} \sum_n  \rho_{\text{ss}}(n) \sin^2(\phi \sqrt{n+1})= 
\sum_n n  \rho_{\text{ss}}(n) - \nu.
\ee
Unlike the "first order transition" occurring at $\alpha=6.66$, a "second order transition" occurs at $\alpha\approx 1$. Here the first derivative of the mean photon number has a jump in the limit of $N_{\text{ex}}\to\infty$. This and the scaling of the potential $U$ with 
$N_{\text{ex}}$ will be discussed in section \ref{sec.numerics}.

Equation \eqref{eq.statmean} shows that the statistics of the trajectories are therefore closely related to the dynamics of the cavity and consequently with its stationary state. The next step is to think of the time trajectories as "configurations" of  stochastic system draw from ideas in non-equilibrium statistical mechanics and large deviations theory to study their 
phases and phase transitions. 
%
%

\subsection{Large deviations}

	The main result of this paper is the existence of a \emph{large deviations principle} for the counting process $\Lambda_{t}$ introduced above. Such results have already been obtained in the context of discrete time quantum Markov chains with finite dimensional systems \cite{Hiai2007}, but the novelty here is that we consider a continuous time Markov process with an infinite-dimensional system. The physical motivation lies in the new approach to the study of phase transitions for open systems developed in \cite{Garrahan2009, Garrahan2011}. Here the idea is to identify \emph{dynamical phase transitions} of the open system, by analysing the statistics of jump trajectories in the long time (stationary) regime. The trajectories play an analogous role to the configurations of a statistical mechanics model at equilibrium. In this analogy, the parameter $s$ of the moment generating function \eqref{eq:mgf} can be seen as a "field" which biases the distribution of trajectories in the direction of active or passive trajectories by effectively changing the probability of a trajectory $\omega:= (i_1, t_1, \dots,  i_n, t_n)$ by a factor $\exp(s \Lambda_t(\omega))$. When $\alpha$ is such that the stationary distribution is unimodal, the trajectories' distribution changes smoothly from passive ones for $s<0 $ to active ones for  $s>0 $. However, near $\alpha\approx 6.66$ (corresponding to the jump in the mean photon number) there is a steep change in the counting rates around $s=0$. The active trajectories are associated to periods when the cavity is in the higher, excited phase while the passive trajectories are connected to the lower phase. Since the cavity makes very rare transitions between the phases, any trajectory -- when followed for long but finite periods of time -- falls typically into one of the two distinct categories (see Figure \ref{fig:trajectories}). Our goal is to investigate whether this distinction survives the infinite time limit, in which case we would deal with a dynamical phase transition characterised by the non-analyticity of a certain large deviations rate function. We will show that this is not the case, but rather we deal with a \emph{cross-over} behaviour; that is, the count rate does not jump but has a very steep change around $s=0$, which appears to become a jump only in the limit of infinite pumping rate $N_{ex} \to\infty$ (see Section 5).

\begin{figure}[t]
    \centering
        \begin{subfigure}[b]{0.95\textwidth}
          \centering
          \begin{subfigure}[b]{0.31\textwidth}
                  \centering
                  \includegraphics[width=\textwidth]{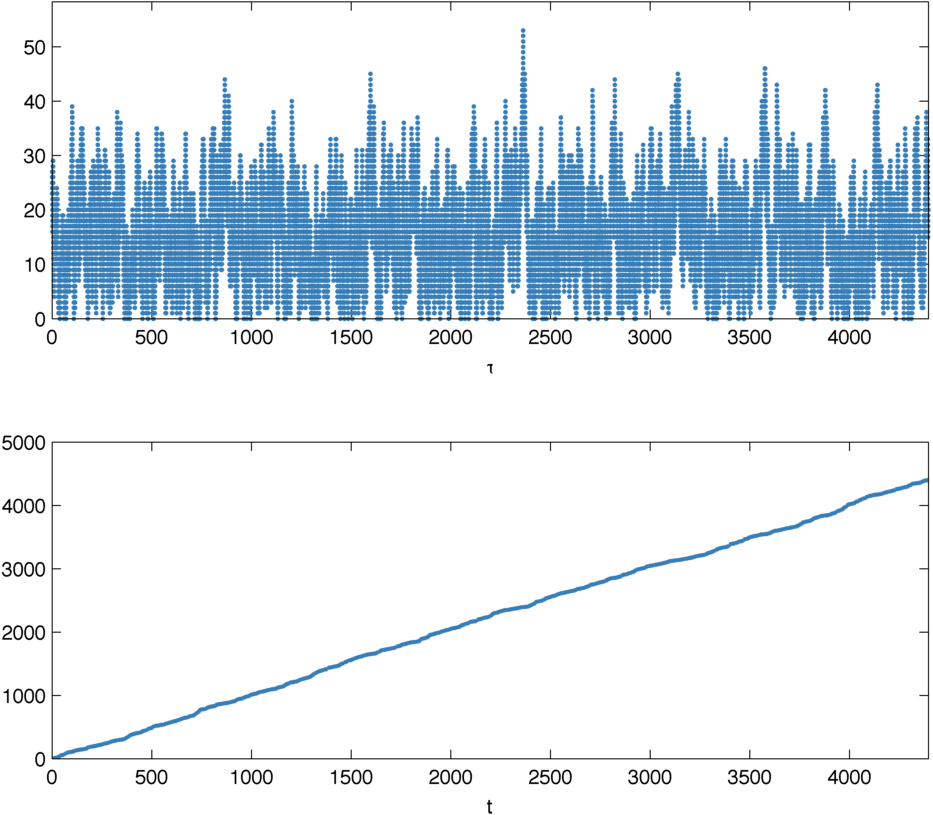}
          \end{subfigure}%
          \quad
          \begin{subfigure}[b]{0.31\textwidth}
                  \centering
                  \includegraphics[width=\textwidth]{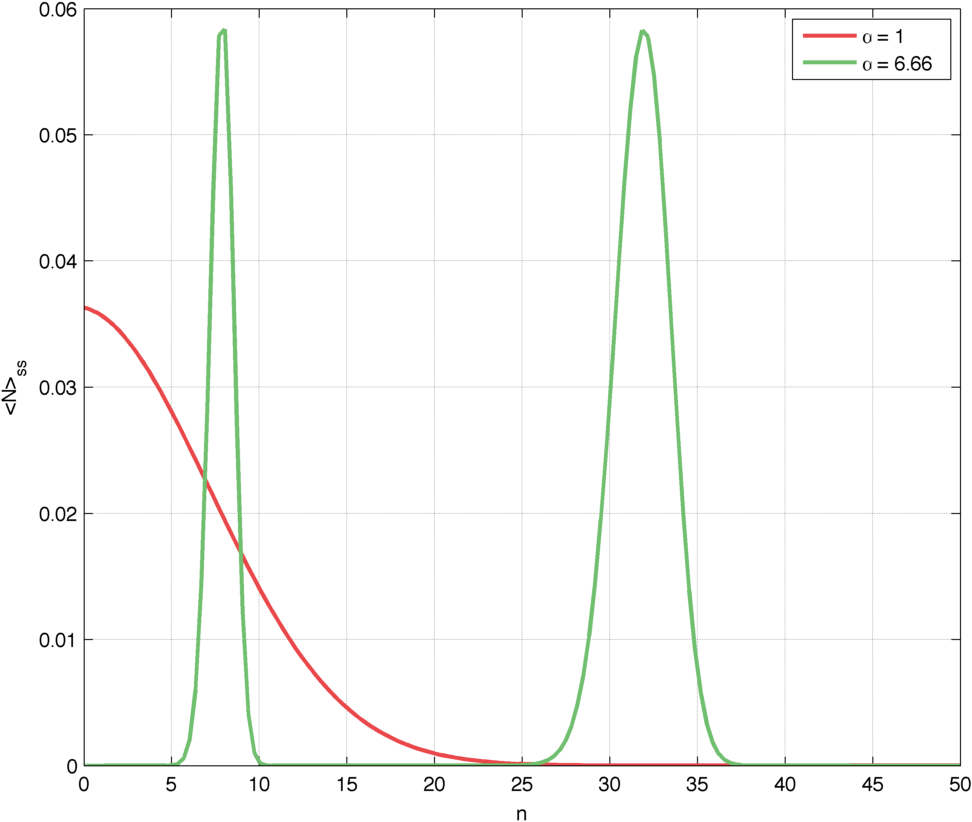}
          \end{subfigure}
          \quad
                    \begin{subfigure}[b]{0.31\textwidth}
                  \centering
                  \includegraphics[width=\textwidth]{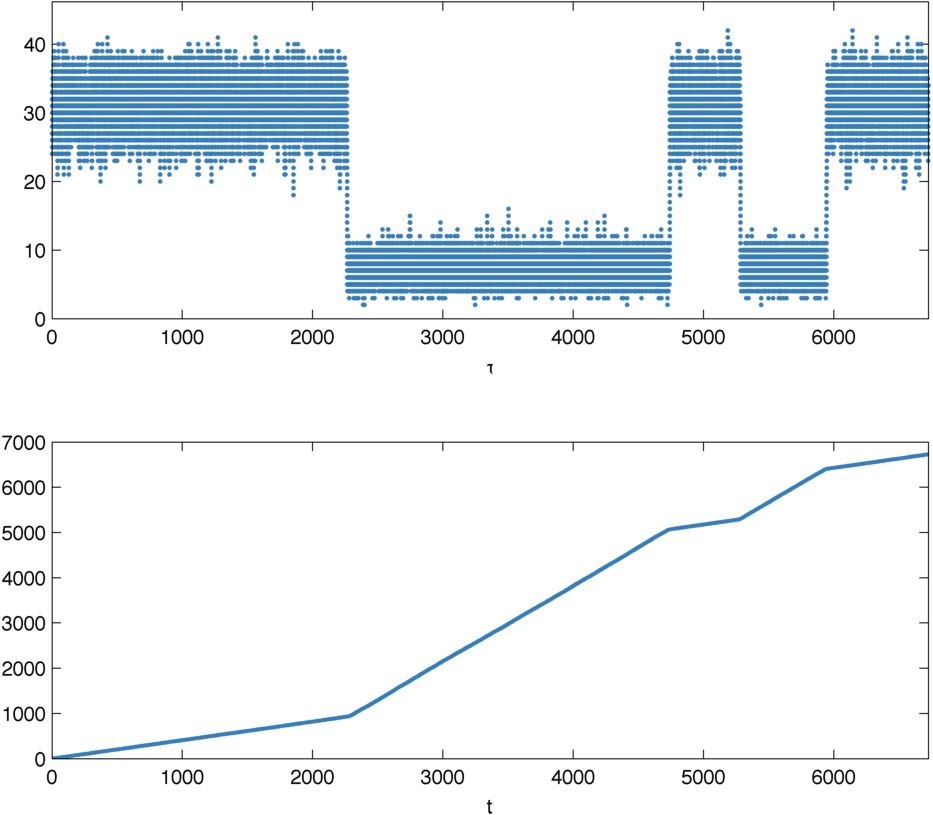}
          \end{subfigure}
    \end{subfigure}
    \caption{Sample trajectories for the birth-death process describing the cavity state jumping on the ladder of Fock states $|k\rangle\langle k|$ (top left and right) and total measurement counts $\Lambda_{t}$ (bottom left and right) for 
    $\alpha \approx 1$ (left) and $\alpha \approx 6.66$ (right) at $N_{\text{ex}}=50$. The corresponding to stationary state distributions (center) showing large variance at $\alpha \approx 1$ (red) and bistability at $\alpha \approx 6.66$ (green).}\label{fig:trajectories}
\end{figure}

We will now briefly review some basic notions of large deviations needed in the paper. We refer the reader to \cite{Dembo2010} for a complete treatment; see \cite{Ellis1995} for a comprehensive overview and \cite{Touchette2009} for an introduction to large deviations in the context of statistical mechanics. Large deviations is a framework for studying rare events, more precisely events whose probabilities decay exponentially for a sequence of probability distributions. A key result is the \emph{G\"{a}rtner-Ellis theorem}, which relates the rate of the exponential decay to the limiting behaviour of the moment generating functions associated to the random variables. 

Informally, a sequence $(\mu_{n})_{n \in \mathbb{N}}$ of probability distributions on $\mathbb{R}^d$ endowed with the Borel $\sigma$-field $\mathcal{B}$ satisfies a \emph{large deviation principle} (LDP)  if there exists a function $I : \mathbb{R}^d \rightarrow [0,\infty]$ such that 
	\begin{equation*}
		\mu_{n}(\id x) \approx e^{-n I(x)} \id x.
	\end{equation*}
More rigorously, the function $I$ is called a \emph{rate function} if it is lower semicontinuous (that is, its level sets $\set{x \in \mathbb{R}^d : I(x) \leq \alpha}$ are closed); if in addition its level sets are compact, we call it a \emph{good rate function}. The domain of $I$ is the set of points in $\mathbb{R}^d$ for which $I$ is finite. The limiting behaviour of the probability measures $\set{\mu_{n}}$ is characterised in terms of asymptotic upper and lower bounds on the values that $\mu_{n}$ assigns to measurable subsets $\Gamma \in \mathcal{B}$. The sequence of probability measures $\set{\mu_{n}}$ satisfies a large deviation principle with a rate function $I$ (or shortly, satisfies an LDP) if for all $\Gamma \in \mathcal{B}$,
	\begin{equation}\label{eq:ldp.def}
		-\inf_{x \in \Gamma^0} I(x) \leq \liminf_{n \rightarrow \infty} \tfrac{1}{n} \log \mu_{n}(\Gamma) \leq \limsup_{n \rightarrow \infty} \tfrac{1}{n} \log \mu_{n}(\Gamma) \leq -\inf_{x \in \bar{\Gamma}} I(x).
	\end{equation}

Our goal is to prove an LDP for the counting process $\Lambda_{t}$ of the atom maser; we will do this not by showing that $\Lambda_{t}$ satisfies the above definition directly, but by applying the G\"{a}rtner-Ellis theorem, which gives sufficient conditions on the sequence of probability measures in order to satisfy an LDP.

\begin{thm}[G\"{a}rtner-Ellis theorem\cite{Dembo2010}, \textbf{pp}. 44-55]
Let $(Z_{n})_{n\in \mathbb{N}}$ be a sequence of random variables in $\mathbb{R}^{d}$ with laws $\mu_{n}$. Suppose that the (limiting) logarithmic moment generating function
\begin{equation*}
	\lambda({\bf s}) = \lim_{n \rightarrow \infty} \tfrac{1}{n} \log \mathbb{E}\left[ e^{\langle n {\bf s}, Z_{n} \rangle} \right],\quad {\bf s} \in \mathbb{R}^{d}
\end{equation*}
exists as an extended real number and is finite in a neighbourhood of the origin, and let $\lambda^{\ast}$ denote the Fenchel-Legendre transform of $\lambda$, given by
\begin{equation*}
	\lambda^{\ast}(x) = \sup_{\lambda \in \mathbb{R}^{d}} \set{\langle {\bf s}, x\rangle - \lambda({\bf s})}.
\end{equation*}
If $\lambda$ is a essentially smooth, lower semicontinuous function (e.g. $\lambda$ is differentiable on $\mathbb{R}^{d}$) then $(Z_{n})_{n\in \mathbb{N}}$ satisfies a LDP with good rate function $\lambda^{\ast}$.
\end{thm}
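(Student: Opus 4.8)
The statement to be proved is the G\"{a}rtner-Ellis theorem; write $\Lambda$ for the limiting logarithmic moment generating function and $\Lambda^{*}$ for its Fenchel-Legendre transform. The plan is the classical two-sided argument: prove the large deviations upper bound for all closed sets, then the lower bound for all open sets, with the hypothesis of essential smoothness entering only at the final step. I would first record the structural preliminaries. For each $n$ the map $\mathbf{s}\mapsto\tfrac1n\log\mathbb{E}[e^{\langle n\mathbf{s},Z_{n}\rangle}]$ is convex by H\"{o}lder's inequality, hence its pointwise limit $\Lambda$ is convex; $\Lambda(0)=0$, so $\Lambda^{*}\ge 0$; and $\Lambda^{*}$ is convex and lower semicontinuous, being a Legendre transform. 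Since $\Lambda$ is finite on a ball $\{\abs{\mathbf{s}}\le\rho\}$ around the origin, $\Lambda^{*}(x)\ge\rho\abs{x}-\sup_{\abs{\mathbf{s}}\le\rho}\Lambda(\mathbf{s})$, so $\Lambda^{*}$ is coercive with compact level sets and is therefore automatically a \emph{good} rate function.

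For the upper bound I would treat compact sets first. For $x,\mathbf{s}\in\mathbb{R}^{d}$ and $\delta>0$ the exponential Chebyshev inequality gives $\mu_{n}(\overline{B}(x,\delta))\le e^{-n\langle\mathbf{s},x\rangle+n\delta\abs{\mathbf{s}}}\,\mathbb{E}[e^{\langle n\mathbf{s},Z_{n}\rangle}]$, whence $\limsup_{n}\tfrac1n\log\mu_{n}(\overline{B}(x,\delta))\le-\langle\mathbf{s},x\rangle+\delta\abs{\mathbf{s}}+\Lambda(\mathbf{s})$; optimising over $\mathbf{s}$, letting $\delta\downarrow 0$, and covering a compact set $K$ by finitely many such balls yields $\limsup_{n}\tfrac1n\log\mu_{n}(K)\le-\inf_{x\in K}\Lambda^{*}(x)$. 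To pass from compact to closed sets I would establish exponential tightness from the finiteness of $\Lambda$ near the origin: for $\pm\rho e_{i}$ along the coordinate directions, $\limsup_{n}\tfrac1n\log\mathbb{P}(\abs{Z_{n}^{(i)}}>L)\le-\rho L+\max(\Lambda(\rho e_{i}),\Lambda(-\rho e_{i}))\to-\infty$ as $L\to\infty$; combined with the compact-set estimate this gives $\limsup_{n}\tfrac1n\log\mu_{n}(F)\le-\inf_{x\in F}\Lambda^{*}(x)$ for every closed $F$.

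For the lower bound it suffices to show, for every open $G$, that $\liminf_{n}\tfrac1n\log\mu_{n}(G)\ge-\Lambda^{*}(x)$ for each $x\in G\cap\mathcal{F}$, where $\mathcal{F}$ is the set of exposed points of $\Lambda^{*}$ whose exposing hyperplane $\eta$ lies in the interior of $\mathrm{dom}\,\Lambda$. For such $x$ I would use the change-of-measure (tilting) argument: define $\tilde\mu_{n}$ by $\tfrac{d\tilde\mu_{n}}{d\mu_{n}}(z)=e^{\langle n\eta,z\rangle}/\mathbb{E}[e^{\langle n\eta,Z_{n}\rangle}]$. Because $\eta$ is interior to $\mathrm{dom}\,\Lambda$, the tilted sequence has limiting logarithmic mgf $\mathbf{t}\mapsto\Lambda(\mathbf{t}+\eta)-\Lambda(\eta)$, which is finite near $0$, so the upper bound just proved applies to $\tilde\mu_{n}$; its rate function equals $\Lambda^{*}(\cdot)-\langle\eta,\cdot\rangle+\Lambda(\eta)$, which — since $x$ is exposed by $\eta$ — vanishes only at $x$. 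By goodness of this rate function $\tilde\mu_{n}(B(x,\delta))\to 1$, and undoing the tilt on $B(x,\delta)\subseteq G$ gives $\liminf_{n}\tfrac1n\log\mu_{n}(G)\ge-\langle\eta,x\rangle-\delta\abs{\eta}+\Lambda(\eta)$; letting $\delta\downarrow 0$ produces $-\Lambda^{*}(x)$.

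The main obstacle — and the only place where \emph{essential smoothness} (together with lower semicontinuity) is genuinely needed — is the final convex-analytic reduction: for every open $G$ one must have $\inf_{x\in G\cap\mathcal{F}}\Lambda^{*}(x)=\inf_{x\in G}\Lambda^{*}(x)$, so that the lower bound over $\mathcal{F}$ upgrades to the full LDP lower bound. This is a Rockafellar-type statement: steepness of $\Lambda$ at the boundary of its domain forces $\nabla\Lambda$ to map $\mathrm{int}\,(\mathrm{dom}\,\Lambda)$ onto every point at which $\Lambda^{*}$ admits a subgradient interior to $\mathrm{dom}\,\Lambda$, and one shows that such exposed points exhaust the effective domain of $\Lambda^{*}$ in the sense required for the infima to agree; lower semicontinuity of $\Lambda$ makes the biconjugacy $\Lambda^{**}=\Lambda$ available for this argument. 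Granting this lemma, the closed-set upper bound together with the open-set lower bound shows that $(Z_{n})_{n\in\mathbb{N}}$ satisfies the LDP with good rate function $\Lambda^{*}$, completing the proof.
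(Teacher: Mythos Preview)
The paper does not supply its own proof of this statement: the G\"{a}rtner-Ellis theorem is quoted verbatim from the reference \cite{Dembo2010} (pp.~44--55) and used as a black-box input to the proof of Theorem~\ref{th.main}. There is therefore no ``paper's proof'' to compare against; the authors' contribution lies entirely in verifying the hypotheses of this theorem for the atom-maser counting process, not in reproving the theorem itself.

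That said, your sketch is a faithful and essentially correct outline of the standard argument in Dembo--Zeitouni: Chernoff bound plus finite subcover for the compact upper bound, exponential tightness from finiteness of $\Lambda$ near the origin to upgrade to closed sets, the exponential tilt at an exposing hyperplane for the lower bound on exposed points, and finally the Rockafellar-type convex-analytic lemma (essential smoothness and lower semicontinuity of $\Lambda$ imply that the relative interior of $\mathrm{dom}\,\Lambda^{*}$ consists of exposed points with exposing hyperplanes in $\mathrm{int}\,(\mathrm{dom}\,\Lambda)$) to pass from $\mathcal{F}$ to all of $G$. You correctly flag the last step as the delicate one and the only place essential smoothness is used. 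For the purposes of this paper no more is required: a citation suffices, and your outline would be appropriate only if the goal were an expository, self-contained treatment.
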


The discrete index in the G\"{a}rtner-Ellis Theorem can be replaced by a continuous one with the obvious 
modifications in \eqref{eq:ldp.def}. By the G\"{a}rtner-Ellis Theorem, $\Lambda_t$ satisfies an LDP if the following limit exists and is a differentiable function, 
\begin{equation}
		\label{eq:mfgconvergence} \lambda(s) := \lim_{t \rightarrow \infty} \frac{1}{t} \log \mathbb{E}_\rho\left(e^{s \Lambda_{t}}\right)=\lim_{t \rightarrow \infty} \frac{1}{t} \log  \Tr \left( \rho \mathcal{T}_s(t)(\mathbf{1})\right).
	\end{equation}
We will show that this is indeed true and $\lambda(s)$ is \emph{spectral bound} (i.e. the eigenvalue with the largest real part) of a certain generator $L^{(d)}_{s}$ which is closely related to $\mathcal{L}_s$. An essential ingredient is the Krein-Rutman theory, that generalize Perron-Frobenius Theorem to compact positive semigroups and ensures that  $\lambda(s)$ is real and non-degenerate. In particular our analysis shows that $\lambda(s)$ is smooth and its derivatives at $s=0$ are the limiting cumulants of $\Lambda_t$
$$
\lim_{t\to\infty} \frac{1}{t} C_k(\Lambda_t) = \left. \frac{d^k \lambda(s)}{ds^k} \right|_{s=0}, \quad k\geq 1,
$$
the first two being the mean and the variance. Moreover the generator $L^{(d)}_s$ has a non-zero spectral gap; this spectral analysis is illustrated in Figure \ref{fig:grids}. 

\section{The main results}
\label{sec.main}

Our main results are the following Large Deviations and Central Limit theorems. For reader's convenience we outline the key steps of the proofs below.  

\begin{thm}\label{th.main}
Suppose that the initial state $\rho$ is a finite rank operator with respect to the Fock basis, or more generally, that
$\sum_{n \geq 0} |\langle e_n, \rho e_n \rangle|^2\rss(n)^{-1}<+\infty$. Then the counting process $\Lambda_t$ satisfies the large deviations principle with rate function equal to the Legendre transform of $\lambda(s)$, where $\lambda(s)$ is the limit in \eqref{eq:mfgconvergence}. The function $\lambda(s)$ is smooth and it is equal to the spectral bound of a certain semigroup generator $L^{(d)}_s$ defined below. 
\end{thm}
In particular the atom maser does not exhibit dynamical phase transitions, but rather cross-over transitions which become sharper as $N_{ex}$ increases. This behaviour will be investigated in more detail in section \ref{sec.numerics}.

 \begin{col}\label{cor:main}
 The counting process $\Lambda_{t}$ satisfies the Central Limit Theorem 
 $$
 \frac{1}{\sqrt{t}} ( \Lambda_t - t \cdot m) \overset{\mathcal{D}}{\longrightarrow} N(0, V), 
 $$
 where $\mathcal{D}$ denotes convergence in distribution and $m$ and $V$ are the mean and variance
 $$
 m = \left. \frac{d \lambda(s) }{ds}\right|_{s=0}=\frac{\mathbb{E}_{\rss}(\Lambda_t)}{t}, \quad V=\left. \frac{d^2 \lambda(s) }{ds^2}\right|_{s=0}.
 $$

 \end{col}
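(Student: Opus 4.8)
The plan is to deduce the Central Limit Theorem from Theorem~\ref{th.main} by invoking the criterion of Bryc~\cite{Bryc}, which upgrades convergence of complexified cumulant generating functions to a CLT. The relevant form of Bryc's result is: if the functions $c_t(z):=\frac1t\log\mathbb{E}(e^{z\Lambda_t})$ are analytic on a fixed complex disc $\{|z|<\delta\}$ for all large $t$ and converge there, uniformly on compacts, to a function $\lambda(z)$, then $\lambda$ is analytic on the disc and $(\Lambda_t-t\lambda'(0))/\sqrt t$ converges in distribution to $N(0,\lambda''(0))$. Thus everything reduces to upgrading the real-variable convergence $c_t(s)\to\lambda(s)$ of \eqref{eq:mfgconvergence}, established in Theorem~\ref{th.main}, to uniform convergence on a complex neighbourhood of the origin with analytic limit (note $c_t$ is automatically analytic near $0$ for each $t$, since the jump rate of $\Lambda_t$ is bounded, so $\mathbb{E}(e^{z\Lambda_t})$ is entire and close to $1$ near $z=0$).

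First I would complexify the deformation. The deformed generator $L^{(d)}_s$ depends on $s$ only through the entire factor $(e^s-1)$ multiplying a fixed jump part, so $s\mapsto L^{(d)}_s$ is an analytic (Kato-type) family of generators on all of $\mathbb{C}$. By Theorem~\ref{th.main}, for real $s$ the spectral bound $\lambda(s)$ is a simple, isolated eigenvalue of $L^{(d)}_s$, separated from the rest of the spectrum by a gap. Simple isolated eigenvalues are stable under analytic perturbation, so there is a complex neighbourhood $\mathcal U\ni 0$ on which $L^{(d)}_s$ has a simple isolated eigenvalue $\lambda(s)$ with spectral projection $P_s$, both analytic in $s\in\mathcal U$, and (after shrinking $\mathcal U$) with spectral gap bounded below by some $g>0$ uniformly on $\mathcal U$. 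In particular $\lambda(\cdot)$ is analytic on $\mathcal U$, which already underpins the smoothness claims of Theorem~\ref{th.main} and furnishes the analyticity Bryc needs.

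Next I would convert the spectral gap into the required uniform convergence. Writing $e^{tL^{(d)}_s}=e^{t\lambda(s)}P_s+R_s(t)$ with $\|R_s(t)\|\le C\,e^{t(\mathrm{Re}\,\lambda(s)-g)}$ uniformly on $\mathcal U$, and transporting this estimate through the $L^2$/diagonal-algebra identification and the duality used in the proof of Lemma~\ref{lemma:transfer.mgf}, equation \eqref{eq:mfgconvergence} gives
$$
e^{-t\lambda(s)}\,\mathbb{E}(e^{s\Lambda_t})=\Tr\!\big(\rho\,P_s(\mathbf 1)\big)+O(e^{-tg}).
$$
At $s=0$ the semigroup is identity preserving, so $P_0(\mathbf 1)=\mathbf 1$ and $\Tr(\rho\,P_0(\mathbf 1))=1$; by continuity of $s\mapsto P_s$ the overlap $\Tr(\rho\,P_s(\mathbf 1))$ remains bounded and bounded away from $0$ on a smaller neighbourhood of the origin. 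Taking $\frac1t\log$ of both sides then yields $c_t(s)\to\lambda(s)$ uniformly there. Bryc's criterion now applies and delivers the CLT with $m=\lambda'(0)$ and $V=\lambda''(0)$, consistent with the cumulant limits $\lim_{t\to\infty}\frac1t C_k(\Lambda_t)=\lambda^{(k)}(0)$ recorded after \eqref{eq:mfgconvergence}.

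The main obstacle is precisely this passage from real to complex $s$ with uniform control. The spectral gap and the non-degeneracy of $\lambda(s)$ are proved for real $s$ by genuinely positivity-based tools (the Krein-Rutman theorem in the $L^2$-representation on the invariant diagonal algebra), and those arguments do not transfer to complex $s$; one must instead feed the real-$s$ conclusions into analytic perturbation theory to secure an honest complex neighbourhood with a uniform gap, and separately verify that the overlap coefficient $\Tr(\rho\,P_s(\mathbf 1))$ does not vanish — which is clear only because it equals $1$ at the physical point $s=0$. Making the remainder bound uniform on $\mathcal U$ (rather than merely pointwise in $s$) is the technical heart of the argument.
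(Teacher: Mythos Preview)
Your proposal is correct and follows essentially the same approach as the paper: extend the spectral gap and simplicity of $\lambda(s)$ from real $s$ to a complex neighbourhood of the origin via analytic perturbation theory, and then invoke Bryc's criterion. The paper's own argument (Points 9 and 11) is considerably terser, so your elaboration of what Bryc actually requires---uniform convergence of $c_t$ on a complex disc and nonvanishing of the overlap at $s=0$---fills in details the paper leaves implicit.
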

 
 \noindent
 \emph{Proof of Theorem \ref{th.main} and Corollary \ref{cor:main}}. 
 
 \noindent
 For clarity of the exposition we break the proof into individual steps.
 
 \noindent
(i) We introduce $L^2(\rss)$ as the completion of $B(\mathfrak{h})$ endowed with the norm 
$\|\cdot\|_2$ induced by the following inner product
\begin{equation}\label{eq:eL2}
\langle Y,X\rangle=\Tr((\rss^{1/4}Y\rss^{1/4})^*(\rss^{1/4}X\rss^{1/4})).
\end{equation}
We recall (see \cite{Carbone2000}, especially Proposition 2.1) that $L^2(\rss)$ is isomorphic as an Hilbert space to the Schatten ideal $L^2(\Tr):=\{X \in B(\mathfrak{h}): \Tr(X^*X)<+\infty\}$ via the unique continuous extension of the correspondence
\[\begin{split}
    i:B(\mathfrak{h}) & \rightarrow L^2(\Tr)\\
    X &\mapsto \rss^{1/4}X\rss^{1/4}.
\end{split}
\]
Hence, with an abuse of notation, we will identify $Y \in L^2(\rss)$ with the corresponding operator in $L^2(\Tr)$.
We recall that for every $X_n,X \in B(\mathfrak{h})$ and $Y \in L^2(\rss)$
\begin{enumerate}
\item $\|X\|_2 \leq \|X\|_\infty$ (\cite[Proposition 2.1]{Carbone2000}) and
\item if $X_n \xrightarrow{w^*} X$, then $\langle Y,X_n \rangle \rightarrow \langle Y, X\rangle$: indeed
\[\begin{split}
    \lim_{n \rightarrow +\infty} \langle Y,X_n \rangle&= \lim_{n \rightarrow +\infty} \Tr( Y^* \rss^{1/4}X_n\rss^{1/4})=\lim_{n \rightarrow +\infty} \Tr( \rss^{1/4}Y^* \rss^{1/4}X_n)= \\
    &=\Tr( \rss^{1/4}Y^* \rss^{1/4}X)= \langle Y, X \rangle.
\end{split}
\]
\end{enumerate}
\begin{lem} \label{lem:l2ext}
For every $s \in \mathbb{R}$ the following hold
\begin{enumerate}[1.]
\item 
There exists a unique strongly continuous semigroup $(T_s(t))_{t \geq 0}$ of bounded linear maps on $L^{2}(\rss)$ such that
\begin{equation*}
	T_s(t)(X) =\mathcal{T}_s(t)(X),\quad X \in \mathcal{B}(\mathfrak{h}).
\end{equation*}
Every $X \in D(\mathcal{L}_s)$ belongs to the domain of the generator $L_s$ of $(T_s(t))_{t \geq 0}$ and
\begin{equation*}
	L_s(X) =\mathcal{L}_s(X),\quad X \in D(\mathcal{L}_s)=D(\mathcal{L}).
\end{equation*}
As usual we denote $(T(t))_{t \geq 0}$ and $L$ the semigroup and the corresponding generator in the case $s=0$.
\item The set $\mathcal{M}(\mathfrak{h})$ of finite rank operators given by finite matrices with respect to the Fock basis forms a core for $L_s$.
\item $L_s = L+ \delta_s$, with $\delta_s$ a bounded perturbation. 
\end{enumerate}
\end{lem}

\emph{Proof of Lemma \ref{lem:l2ext}}

1. First, we show that $\rss$ is sub-invariant for ${\cal T}^\prime$ (as defined in the proof of points 1. and 2. of Proposition \ref{prop:defdef}). Since $\rss \in D({\cal L}^\prime_*)$ and $\rss$ is invariant for ${\cal T}$, this is equivalent to showing that
$$
{\cal L}^\prime_*(\rss)=({\cal L}^\prime_*-{\cal L}_*)(\rss)=(e^s-1) \left ({\cal J}_{1*}(\rss)-1_{s>0}N_{ex}\left ( 1+ \frac{1}{\nu} \right) \rss\right ) \le 0.
$$
This inequality is trivial for $s<0$, while, for $s>0$, we can write
 the explicit form of $\rss$ and get
\[\begin{split}
{\cal J}_{1*}(\rss)&=N_{ex}\sum_{n \geq 0}  \sin^2(\phi \sqrt{n+1}) \rss(n) \ket{n+1}\bra{n+1}=\\
&=N_{ex}\sum_{n \geq 0}  \sin^2(\phi \sqrt{n+1}) \frac{\rss(n)}{\rss(n+1)}\rss(n+1)\ket{n+1}\bra{n+1}=\\
&\leq N_{ex}\left ( 1+ \frac{1}{\nu} \right) \rss
\end{split}
\]
since
\[ \frac{\rho_{ss}(n)}{\rho_{ss}(n+1)}=\frac{(\nu+1)(n+1)}{\nu(n+1)+N_{ex}\sin^2(\phi\sqrt{n+1})}\leq 1+\frac{1}{\nu}.
\]
Then we apply \cite[Theorem 2.3]{Carbone2000} and we obtain that the semigroup ${\cal T}^\prime$ can be extended to a strongly continuous contraction semigroup on $L^2(\rss)$. Also here, the conclusion follows multiplying the semigroup by a suitable exponential factor as for points 1. and 2. of Proposition \ref{prop:defdef}.

\sloppy 2. and 3. The proof of the previous point shows that $ \left (N_{ex}\left(1+ \frac{1}{\nu} \right ) \right )^{-1} {\cal J}_{1*}(\rss) \leq \rss$, hence we can apply again \cite[Theorem 2.3]{Carbone2000} in order to show that $\left (N_{ex}\left(1+ \frac{1}{\nu} \right ) \right )^{-1} {\cal J}_1$ extends to a bounded operator on $L^2(\rss)$ and so does ${\cal L}_s-{\cal L}=(e^s-1){\cal J}_1$; let us call $\delta_s$ such extension. Since on ${\cal M}(\mathfrak{h})$ we have $L_s=L+\delta_s$, it follows that if we prove that ${\cal M}(\mathfrak{h}) \subset D({\cal L})$ is a core for $L_s$, we have that $D(L_s)=D(L)$ and $L_s=L+\delta_s$. Notice that both $L$ and $\delta_s$ preserve ${\cal M}(\mathfrak{h})$, that ${\cal M}(\mathfrak{h})$ is dense in $B(\mathfrak{h})$ in the $w^*$-topology and hence in $L^2(\rss)$ in norm. Hence we only need to show that ${\cal M}(\mathfrak{h})$ is a set of analytic vectors for $L_s$ and then apply Proposition \ref{prop:core} in Appendix \ref{appendix:other.restults}. 

Let us fix $s \in \mathbb{R}$. Then, for $n,m \in \mathbb{N}$ we have
\be \label{eq:Lenem}
L_s(\ket{e_n}\bra{e_m})=\alpha_{m,n} \ket{e_{n-1}}\bra{e_{m-1}}+\beta_{m,n} \ket{e_n}\bra{e_m}+\gamma_{m,n}\ket{e_{n+1}}\bra{e_{m+1}}
\ee
where

\begin{align*}
&\alpha_{m,n}=\nu \sqrt{nm}+N_{ex}e^s\sin(\phi\sqrt{n})\sin(\phi\sqrt{m}),\\
&\beta_{m,n}=-\frac{1}{2}\left ((\nu+1)(m+n) + \nu (m+n+2) \right )+N_{ex}( \cos(\phi\sqrt{n+1})\cos(\phi\sqrt{m+1})-1),\\
&\gamma_{m,n}=(\nu+1)\sqrt{n+1}\sqrt{m+1}.\\
\end{align*}
Notice that $|\alpha_{m,n}|+|\beta_{m,n}|+|\gamma_{m,n}|\leq \underbrace{3(\nu+1+2(e^s \vee 1)N_{ex})}_{=:B}(n+m+1)$, hence
\[\begin{split}
\lVert L_s^k(\ket{e_n}\bra{e_m} \rVert_{\infty} &\leq B^k (n+m+1)\cdot (n+m+3) \cdots (n+m+2k-1) \leq\\
& \leq B^k 2^k (n+m+1) (n+m+2) \cdots (n+m+k) =\\
&=B^k 2^k k! \begin{pmatrix} n+m+k \\ n+m \end{pmatrix} \leq 2^{n+m}4^k B^k k!.\\
\end{split}
\]
If $\overline{T}=1/(4B)$, for any $t<\overline{T}$, $\sum_{k=0}^{+\infty} \frac{t^k L_s^k(\ket{e_n}\bra{e_m})}{k!}$ converges uniformly on compact intervals in the uniform norm, hence with respect to $\norm{\text{ }}_2$. By the definition of ${\cal M}(\mathfrak{h})$, we can deduce the same for every $X \in {\cal M}(\mathfrak{h})$. Let us call $X_t:=\sum_{k=0}^{+\infty} \frac{t^k L_s^k(X)}{k!}$ and notice that, since the series of the derivatives converges uniformly on compact intervals, it solves the following abstract Cauchy problem:
\[X_t=X+L_s\int_0^t X_u du, \quad t\leq \overline{T}.
\]
Hence, by \cite[Proposition 6.4]{Engel2006}, $X_t=T_s(t)(X)$ for every $t<\overline{T}$.

\qed

 \noindent
(ii)
The moment generating function of $\Lambda_t$ for an initial state $\rho$ (cf. equation \eqref{eq:mgf}) can be expressed in terms of the semigroup acting on $L^2(\rss)$ as 
	\begin{equation*}
		\mathbb{E}_{\rho}(e^{s \Lambda_{t}}) = 
		\Tr ( \rho \mathcal{T}_{s}(t)(\mathbf{1}))
		=\langle \tilde{\rho}, T_{s}(t)(\mathbf{1}) \rangle,
	\end{equation*}
\sloppy where $\tilde{\rho}:= \rho_{ss}^{-1/2} \rho \rho_{ss}^{-1/2}$ is assumed to belong to $L^2(\rss)$, or equivalently $\sum_{n \geq 0} |\langle e_n, \rho e_n \rangle|^2\rss(n)^{-1}<+\infty$ (in this case $\tilde{\rho}$ extends to a bounded linear functional on $L^2(\rss)$). This holds for instance if $\rho$ has a finite number of photons, but also if $\rho= \rho_{ss}$.

 \noindent
(iii) As we already mentioned, the commutative Von Neumann algebra $B_d(\mathfrak{h})$ of the operators which are diagonal in the Fock basis plays a fundamental role. We need to introduce some other related linear spaces:
\[
\begin{split}
 & {\cal M}_d(\mathfrak{h}):={\cal M}(\mathfrak{h}) \cap B_d(\mathfrak{h})={\rm span}\{ \ket{e_n}\bra{e_n}: n \in \mathbb{N}\}\subset D(\mathcal{L}),\\
 & L^2_d(\rss):=\overline{{\cal M}_d(\mathfrak{h})}^{\| \text{ } \|_2}.\\
 \end{split}
\]
Notice that $B_d(\mathfrak{h}) \simeq \ell^\infty(\mathbb{N})$ as Von Neumann algebras and $L^2_d(\rss) \simeq \ell^2(\mathbb{N},\rss)$ as Banach spaces.
\begin{prop} \label{prop:diag}
The following statements hold for every $s \in \mathbb{R}$.

1. The generator $\mathcal{L}_s$ and the corresponding semigroup $(\mathcal{T}_s(t))_{t\geq 0}$ preserve the algebra $B_d(\mathfrak{h})$. Consequently also $L_s$ and $T_s$ preserve the subspace $L^2_d(\rss)$.

2. The action of $\mathcal{L}_s$ on the diagonal is explicitly written as an operator $\mathcal{L}_s^{(d)}$ acting on $f=\sum_k f_k \ket{e_k}\bra{e_k} \in D(\mathcal{L}_s^{(d)}) :=D(\mathcal{L}_s) \cap B_d(\mathfrak{h})$  as $\mathcal{L}_s^{(d)}(f)=
g=\sum_k g_k \ket{e_k}\bra{e_k}
$
with
\begin{eqnarray}\label{diagonal generator}
g_k &=&   k(\nu+1)(f_k-f_{k-1})
+ (e^s-1) N_{ex}\sin^2(\phi\sqrt{k+1}) f_{k+1}
\nonumber\\
&& +(\nu(k+1)+N_{ex}\sin^2(\phi\sqrt{k+1}))(f_{k+1}-f_k).
\end{eqnarray}
In particular, $\mathcal{L}_0^{(d)}=\mathcal{L}^{(d)}$ is the generator of a birth-death process with birth and death rates as in equation \eqref{eqn:rates}.
\end{prop}

\emph{Proof of Proposition \ref{prop:diag}}

1. $\mathcal{M}_d(\mathfrak{h}) \subset D(\mathcal{L}_s) \cap B_d(\mathfrak{h})$ is $w^*$-dense in $B_d(\mathfrak{h})$, hence it is sufficient to compute explicitly $\mathcal{L}_s(\ket{e_k}\bra{e_k})$ and observe that it belongs to 
$${\rm span}\{\ket{e_j}\bra{e_j}, j=k-1,k,k+1\}$$ for all $k$. Then $\mathcal{L}_s({\mathcal M}_d(\mathfrak{h}))\subseteq {\mathcal M}_d(\mathfrak{h})$. The rest follows from the definitions.
\\
2. This point is a direct computation by using the definition of the generator $\mathcal{L}_s$ given by equations \eqref{eq:lindblad} and \eqref{eq:perturbed}.

\qed


\noindent
We denote by $L_s^{(d)}$ and $T_s^{(d)}$ the restrictions of $L_s$ and $T_s$ to $L^2_d(\rss)$ (as usual we drop the index $s$ in the case $s=0$). Since $\mathbf{1} \in L^2_d(\rss) $, the moment generating function can be expressed as 
$$
\mathbb{E}_{\rho}(e^{s \Lambda_{t}}) = \langle \tilde{\rho}^{(d)}, T^{(d)}_{s}(t)(\mathbf{1}) \rangle
$$
with $ \tilde{\rho}^{(d)}$ denoting the diagonal of  $ \tilde{\rho}$.


 \noindent
(iv) The semigroup $(T^{(d)}_s(t))_{t \geq 0}$ is \emph{immediately compact} (i.e. $T^{(d)}_s(t)$ is compact for all $t>0$). In order to prove it we first show it for $T^{(d)}(t)$ with the classical theory of birth-death processes and then use perturbation theory.

\begin{lem}\label{lemma:selfadj.gen}
The following statements hold.

\begin{enumerate}[1.]
\item $L^{(d)}$ is selfadjoint and its essential spectrum is empty,
\item $\left(T^{(d)}(t)\right)_{t \geq 0}$ is immediately compact,
\item $\left(T^{(d)}_s(t)\right)_{t \geq 0}$ is immediately compact for all $s\in \mathbb{R}$.
\end{enumerate}
\end{lem}
We recall that $\sigma(L^{(d)})$ is the disjoint union of the discrete and the essential spectrum and the discrete spectrum for a selfadjoint operator is defined as those $\alpha \in \mathbb{C}$ which are isolated eigenvalues with finite multiplicity (see \cite[Theorem VII.10]{Reed1972}).

\emph{Proof of Lemma \ref{lemma:selfadj.gen}}

1. The proof of point 2. of Lemma \ref{lem:l2ext} shows that ${\cal M}_d(\mathfrak{h})$ is a dense linear space of analytic vectors for $L^{(d)}$, hence, by Nelson theorem (\cite[Theorem X.39]{Reed1975}), in order to establish the selfadjointness of $L^{(d)}$, it is enough to check that it is symmetric on ${\cal M}_d(\mathfrak{h})$ (which is a simple computation using equation \eqref{eq:Lenem}).

In order to show that the essential spectrum of $L^{(d)}$ is empty it suffices to check that the following condition on birth and death rates holds (cf. \cite[Theorem 1.2]{Mao2006} and we already know that the process is non-explosive and admits a unique invariant state):
\[\lim_{n \rightarrow +\infty} \left (\sum_{i=1}^{n} \frac{1}{\mu^2_i \rss(i)} \right ) \cdot \sum_{j=n+1}^{+\infty} \rss(j)=0.
\]
Let us do the computations:
\[\begin{split}
&\sum_{i=1}^{n} \frac{1}{\mu^2_i\rss(i)} \cdot \sum_{j=n+1}^{+\infty} \rss(j)=\sum_{i=1}^{n}\sum_{j=n+1}^{+\infty} \frac{1}{\mu^2_i} \frac{\rss(j)}{\rss(i)}\\
&=\sum_{i=1}^{n}\sum_{j=n+1}^{+\infty} \frac{1}{\mu^2_i} \prod_{k=i+1}^{j} \frac{\nu + N_{ex} \sin^2(\phi\sqrt{k})/k}{\nu+1}.\\
\end{split}
\]
Notice that for every $0 <\epsilon<1$, there exists $M \in \mathbb{N}$ such that for every $k \geq M$, $N_{ex} \sin^2(\phi\sqrt{k}) \leq   \epsilon k$, therefore for $n > M$ we have
\[\begin{split}
&\sum_{i=1}^{n} \frac{1}{\mu^2_i\rss(i)} \cdot \sum_{j=n+1}^{+\infty} \rss(j)\leq \underbrace{\sum_{i=1}^{M} \frac{1}{\mu^2_i\rss(i)}\cdot \sum_{j=n+1}^{+\infty} \rss(j)}_{=o(1)}\\
&+  \sum_{i=M+1}^{n} \frac{1}{\mu^2_i} \cdot \sum_{j=n+1}^{+\infty} \left (\underbrace{ \frac{\nu +\epsilon}{\nu+1}}_{=:q} \right )^{j-i-1}\leq  \frac{1}{1-q}\sum_{i=M+1}^{n} \frac{q^{n-i}}{(\nu+1) i}+o(1)\\
&= \frac{1}{1-q}\sum_{i=0}^{n-M-1} \frac{q^i}{(\nu+1) (n-i)}+o(1).\\
\end{split}
\]
The result follows from the fact that $q<1$ and the dominated convergence theorem.\\
2. $L^{(d)}$ is a selfadjoint, unbounded operator with empty essential spectrum and such that $L^{(d)}\leq 0$, therefore its spectral resolution reads
\[L^{(d)}=- \sum_{n \geq 0} \alpha_n P_n,
\]
where $P_n$'s are finite dimensional orthogonal projections and $\alpha_n$'s are distinct non-negative real numbers such that $\lim_{n \rightarrow +\infty}\alpha_n=+\infty$ (they do not accumulate). The semigroup generated by $L^{(d)}$ can be expressed via functional calculus as
\[T^{(d)}(t)=\sum_{n \geq 0} e^{-t\alpha_n} P_n, \quad t \geq 0.
\]
By its spectral representation, we can conclude that $T^{(d)}$ is immediately compact.
\\
3. The restriction $L^{(d)}_{s}$ is a bounded perturbation of the generator $L^{(d)}$, hence the semigroup $\left(T^{(d)}_s(t)\right)_{t \geq 0}$ is also immediately compact, cf. \cite{Engel2001} Thm. III.1.16. 

\qed

Since $(T^{(d)}_{s}(t))$ is an immediately compact semigroup, we have (\cite{Engel2001}, Col. IV.3.12) a spectral mapping theorem of the form
\begin{equation*}
	e^{t \sigma(L^{(d)}_{s})} = \sigma(T^{(d)}_{s}(t)) \setminus \set{0},\quad t > 0;
\end{equation*}
in particular, the \emph{spectral radius} of $T^{(d)}_{s}(t)$ is given by
\begin{equation*}
r_s(t):=	r(T^{(d)}_{s}(t)) = e^{t \lambda(s)}.
\end{equation*}
where $\lambda(s)$ is the \emph{spectral bound} of $L^{(d)}_{s}$, i.e. the real part of the eigenvalue with the largest real part.

 \noindent
(v) The semigroup $(T^{(d)}_{s}(t))_{t \geq 0}$ is strictly positive, that is $T^{(d)}_{s}(t)(D)>0$ for all 
$D\geq 0$ in $L^2_d(\rss)$ and $t> 0$.
\begin{proof}
It is not difficult to see that every $D\geq 0$ in $L^2_d(\rss)$ is of the form $D=\sum_k D_k \ket{e_k}\bra{e_k}$ for some $D_k \in \ell^2(\mathbb{N}, \rss)$ and $D_k\geq 0$ for every $k$, hence it is enough to show that for every $k$ $T_s^{(d)}(\ket{e_k}\bra{e_k})>0$ for every $t >0$. Notice that
$(P_{m,n}(t):={\rm e}^{t (1-e^s)\left ( 1+ \frac{1}{\nu} \right )N_{ex}1_{\{s>0\}}}\langle e_n,T_s^{(d)}(t)(\ket{e_m}\bra{e_m}) e_n \rangle)_{n,m \in \mathbb{N}}$ is a standard transition function and so by well known results of continuous time Markov chains we have that either $P_{m,n}$ is constant and equal to $0$ or it is strictly positive for every $t>0$ (Levy's Theorem, \cite[Proposition 1.3]{An91}). Since the birth and death rates are all strictly positive, $P_{m,n}(t)>0$ for every positive time (for a reference about continuous time Markov chains see for instance \cite{An91}).
\end{proof}


(vi) Since $\left(T^{(d)}_{s}(t)\right)_{t > 0}$ is compact and strictly positive, Krein-Rutman theory (\cite[Theorem 1.5]{Chang2020}) implies that the spectral radius of $T^{(d)}_{s}(t)$ is an algebraically simple eigenvalue with strictly positive right and left eigenvectors $r(s)$ and $l(s)$.
\begin{lem}\label{lemma:dominant.term}
For every $s \in \mathbb{R}$, there exists a strictly positive number $g(s)>0$ such that for every $t>0$ 
\be \label{eq:spgap}
T_s^{(d)}(t)= e^{t\lambda(s)} (\ket{r(s)}\bra{l(s)} + R_t^s)
\ee
and $\| R_t^s \|_{L^2_d(\rss)\rightarrow L^2_d(\rss)} , =O(e^{-g(s)t})$.
\end{lem}

\emph{Proof of Lemma \ref{lemma:dominant.term}}

Because of compactness of $T_s^{(d)}(t)$, we only need to prove that
$$\sigma(T_s^{(d)}(t)) \cap \{z \in \mathbb{C}:|z|=e^{\lambda(s)t}\}=\{e^{\lambda(s)t}\}.$$
Suppose that there exist $\theta \in [0,2\pi)$ and $D \in L_d^2(\rss)$ such that $L_s^{(d)}(D)=(\lambda(s)+i\theta)D$, then if we consider $t=2\pi/\theta$, $T_s^{(d)}(t)(D)=e^{\lambda(s)t}D$. Since $l(s)$ is the unique eigenvector with eigenvalue one, we conclude that $\theta=0$.

\qed

Using point (iii) 
this implies that 
\begin{eqnarray}
\mathbb{E}_{\rho} (e^{s \Lambda_t} ) &=& e^{t\lambda(s)} \left( \langle    \tilde{\rho}, r(s) \rangle \langle   l(s), \mathbf{1} \rangle  + o(1)\right).
\label{eq.dominant.term}
\end{eqnarray}
Since $l(s),r(s) >0$ and $\tilde{\rho}, \mathbf{1} \geq 0$ the inner products are non-zero and we obtain the limiting cumulant generating function
$$
\lim_{t\to\infty} \frac{1}{t} \log \mathbb{E}_{\rho_{\rm in}} (e^{s \Lambda_t} ) = \lambda(s).
$$

 \noindent
(vii)
Using analytic perturbation theory for the generator $L^{(d)}_{s}$, the spectral bound $\lambda(s)$ can be shown to be a smooth function of $s$ in a complex neighborhood of the real line. Since for every $s_0 \in \mathbb{R}$, $\lambda(s_0)$ is an isolated eigenvalue with finite multiplicity, we can apply \cite[Prop. 3.25, p. 141]{Chatelin1983} to the family of perturbations
	\begin{equation*}
		V_{s-s_0}:=L^{(d)}_{s} = L_{s_0}^{(d)} + \delta_{s}-\delta_{s_0}=L_{s_0}^{(d)} + e^{s_0}(e^{s-s_0}-1) J_1^{(d)}
	\end{equation*}
and we find that $\lambda(s)$ is an analytic function of $s$ and remains isolated in a complex neighborhood of $s_0$.

 \noindent
(viii)
Using points (vi) 
and (vii), 
 we apply G\"{a}rtner-Ellis theorem to conclude that $\Lambda_t$ satisfies the LD principle with rate function equal to the Legendre transform of $\lambda(s)$.

 \noindent
(ix)
Furthermore, by the result of \cite{Bryc}, it follows that $\Lambda_t$ satisfies the CLT. In particular, the limiting cumulants of $\Lambda_t$ can be computed as derivatives of 
$\lambda(s)$ at $s=0$,

$$
\lim_{t\to\infty} \frac{1}{t} C_k(\Lambda_t) = \left. \frac{d^k \lambda(s)}{ds^k}\right|_{s=0}.
$$
By $J_1^{(d)}$ we denote the operator acting on $D=\sum_k D_k \ket{e_k}\bra{e_k} \in L^2_d(\rss)$ in the following way:
\[J_1^{(d)}(D)=\sum_{k \geq 1} N_{\rm ex} \sin^2(\phi \sqrt{k}) D_{k} \ket{e_{k-1}}\bra{e_{k-1}}.
\]
\begin{lem}\label{lemma:derivatites.lambda}
\begin{align*}
\left. \frac{d\lambda(s)}{ds}\right|_{s=0}&=\langle \mathbf{1}, J^{(d)}_1(\mathbf{1}) \rangle,\\
\left. \frac{d^2 \lambda(s)}{ds^2}\right|_{s=0}&=\langle \mathbf{1}, J^{(d)}_1(\mathbf{1}) \rangle + 2 \langle \mathbf{1}, J^{(d)}_1(D_V) \rangle
\end{align*}
and $D_V$ can be characterized as the unique solution in $L^2_d(\rss)$ of
\[L^{(d)}(D_V)=\langle \mathbf{1}, J^{(d)}_1(\mathbf{1}) \rangle\mathbf{1}-J^{(d)}_1(\mathbf{1}).
\]
\end{lem}

\emph{Proof of Lemma \ref{lemma:derivatites.lambda}}

Differentiating first once and then twice
\[L_s^{(d)}(r(s))=\lambda(s)r(s)
\]
and evaluating in $s=0$ we get
\begin{equation} \label{eq:mu}
L^{(d)}(r^\prime(0))+J^{(d)}_1(\mathbf{1})=\lambda^\prime(0) \mathbf{1}
\end{equation}
\begin{equation} \label{eq:sigma}
L^{(d)}(r^{\prime \prime}(0))+J^{(d)}_1(\mathbf{1})+2 J^{(d)}_1(r^\prime(0))=\lambda^{\prime\prime}(0) \mathbf{1}+2\lambda^\prime(0) r^\prime(0)
\end{equation}
where we used the fact that $\lambda(0)=0$ and $r(0)=\mathbf{1}$. Notice that $l(0)=\mathbf{1}$ too, hence taking the scalar product of equation \eqref{eq:mu} against $\mathbf{1}$ we get
\[\lambda^\prime(0)=\langle \mathbf{1}, J^{(d)}_1(\mathbf{1}) \rangle.
\]
Substituting the expression we obtained for $\lambda^\prime(0)$ in Equation \eqref{eq:mu} we get that $r^\prime(0)$ is the unique (remember that $\ker(L^{(d)})$ has dimension $1$) solution of
\[L^{(d)}(r^\prime(0))=\langle \mathbf{1}, J^{(d)}_1(\mathbf{1}) \rangle \mathbf{1}-J^{(d)}_1(\mathbf{1}).
\]
We can choose $r(s)$ such that $\langle \mathbf{1}, r(s) \rangle =1$, hence substituting the expression we obtained for $\lambda^\prime(0)$ in equation \eqref{eq:sigma} and taking the scalar product against $\mathbf{1}$ we get
\[\lambda^{\prime \prime}(0)=\langle \mathbf{1}, J^{(d)}_1(\mathbf{1}) \rangle + 2 \langle \mathbf{1}, J^{(d)}_1(r^\prime(0)) \rangle.
\]

\qed

Notice that $\lambda^\prime(0)$ is the expected value of $\frac{\Lambda_t}{t}$ if the system starts in the stationary state $\rss$ (Equation (\ref{eq.statmean})), indeed:
\[
\langle \mathbf{1}, J^{(d)}_1(\mathbf{1}) \rangle= \Tr(\rss L_1^*L_1)=\sum_{n\geq 0}  n\rss(n)-\nu.
\]

\section{Numerical analysis}\label{sec.numerics}

\begin{figure}[t]
    \centering
    \begin{subfigure}[b]{0.45\textwidth}
        \centering
        \includegraphics[width=\textwidth]{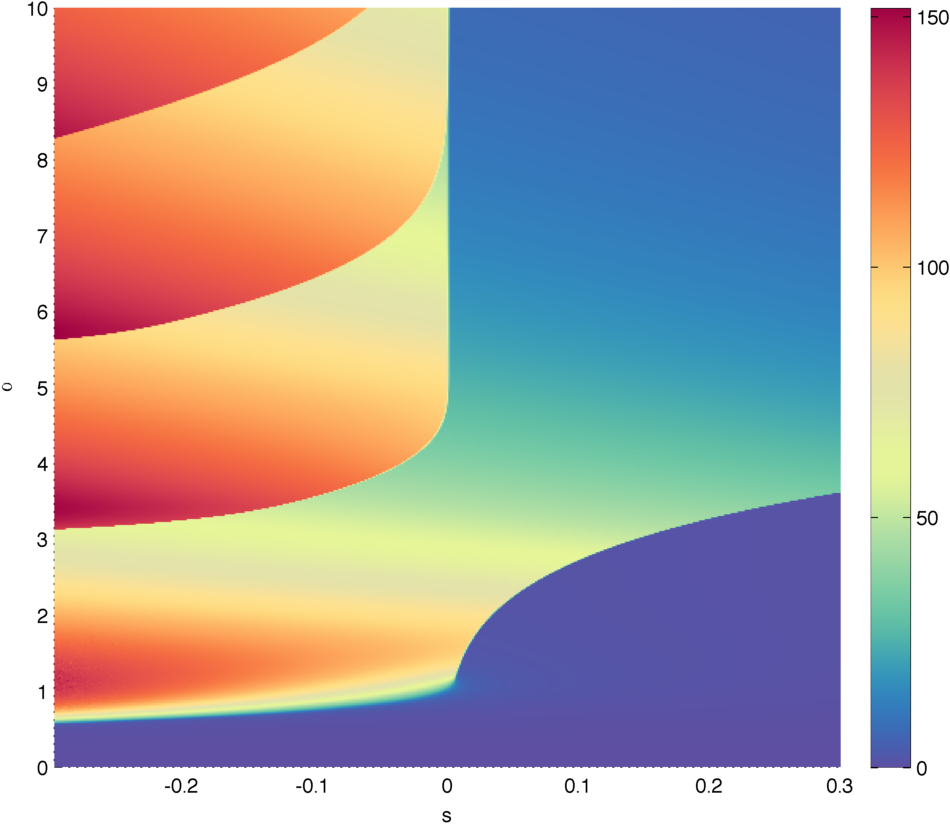}
        \caption{$\lambda'(s)$}
    \end{subfigure}
    \quad
    \begin{subfigure}[b]{0.45\textwidth}
        \centering
        \includegraphics[width=0.975\textwidth]{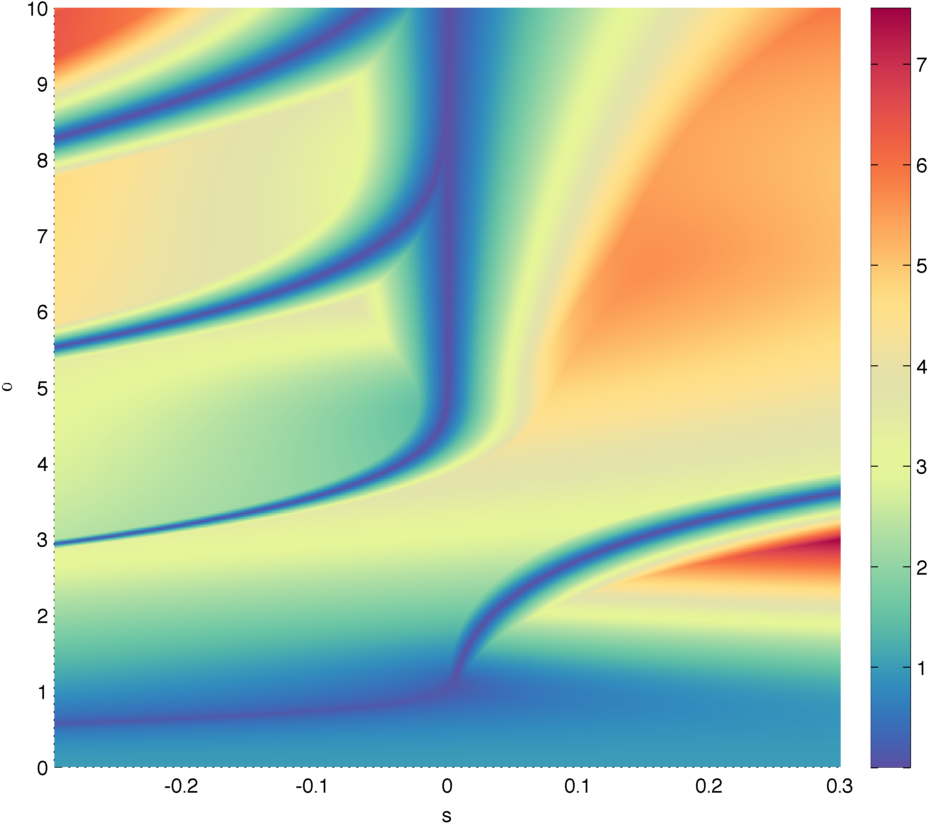}
        \caption{g(s)}
    \end{subfigure}
    \caption{Derivative $\lambda'(s)$ (panel (a)) and the spectral gap $g(s)$ of $L_{s}$ (panel (b)) as functions of $s$ and $\alpha = \phi /\sqrt{N_{\text{ex}}}$ (after Fig. 3 in \cite{Garrahan2011}).}\label{fig:grids}
\end{figure}

The existence of a  "phase transition" in the atom maser has  been discussed in several theoretical physics papers \cite{Briegel1994,Englert2002a,Benson1994,Rempe1990,Garrahan2011}. There is a general agreement that if $N_{ex}$ is sufficiently large (for instance $N_{ex}\approx 150$ ), then "for all practical purposes" we can consider that the mean photon number of the stationary state has a jump at $\alpha\approx 6.66$ (see Figure \ref{fig:stationary}) which matches up  with a jump between the left and right derivatives of $\lambda(s)$ at $s=0$, in the dynamical scenario (see Figure \ref{fig:grids}). However, the question whether we are dealing with a "true" (dynamical) phase transition or rather a steep but smooth cross-over was left open, and motivated this investigation. Having proved that the latter is the case, we would like to briefly put the result in the context of a numerical analysis.

As the proof suggests, dynamical phase transitions are intimately connected with the closing of the spectral gap of the semigroup generator. Figure \ref{fig:grids} shows the close match between the behaviour of the first derivative of 
$\lambda(s)$ and the spectral gap $g(s):=\lambda(s)- {\rm Re} \lambda_1(s)$. In particular, at first sight it would appear that for $\alpha\geq 4.6$ (the point where the stationary state becomes bistable), the entire $s=0$ line is a phase separation line. However, by zooming in a vertical strip of size $10^{-7}$ in this region (see Figure \ref{fig:phaseboundaries}), we find that the line separating the phases is not perfectly vertical but crosses $s=0$ at 
$\alpha\approx 6.6$ which corresponds roughly to the transition point for the stationary state. Moreover, on this scale it is clear that we deal with a steep but smooth transition between phases.

\begin{figure}[t]
    \centering
        \begin{subfigure}[b]{0.95\textwidth}
          \centering
          \begin{subfigure}[b]{0.31\textwidth}
                  \centering
                  \includegraphics[width=\textwidth]{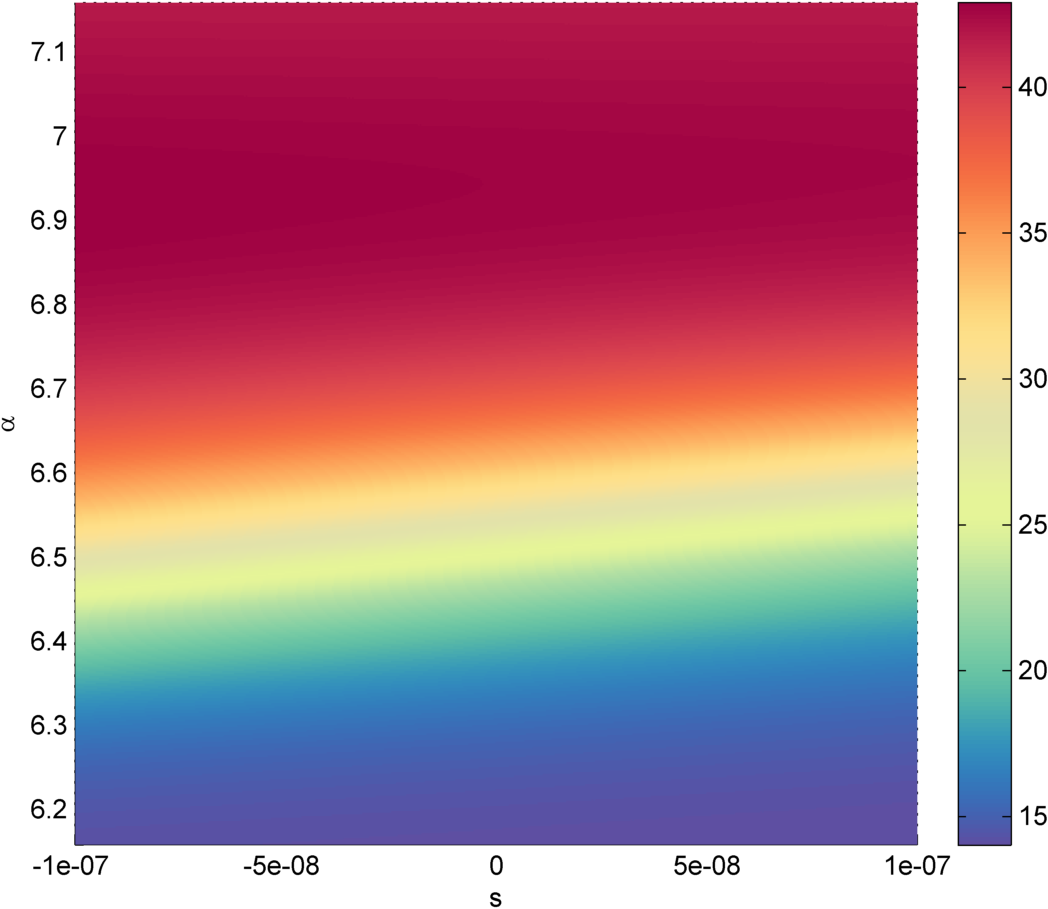}
          \end{subfigure}%
          \begin{subfigure}[b]{0.31\textwidth}
                  \centering
                  \includegraphics[width=\textwidth]{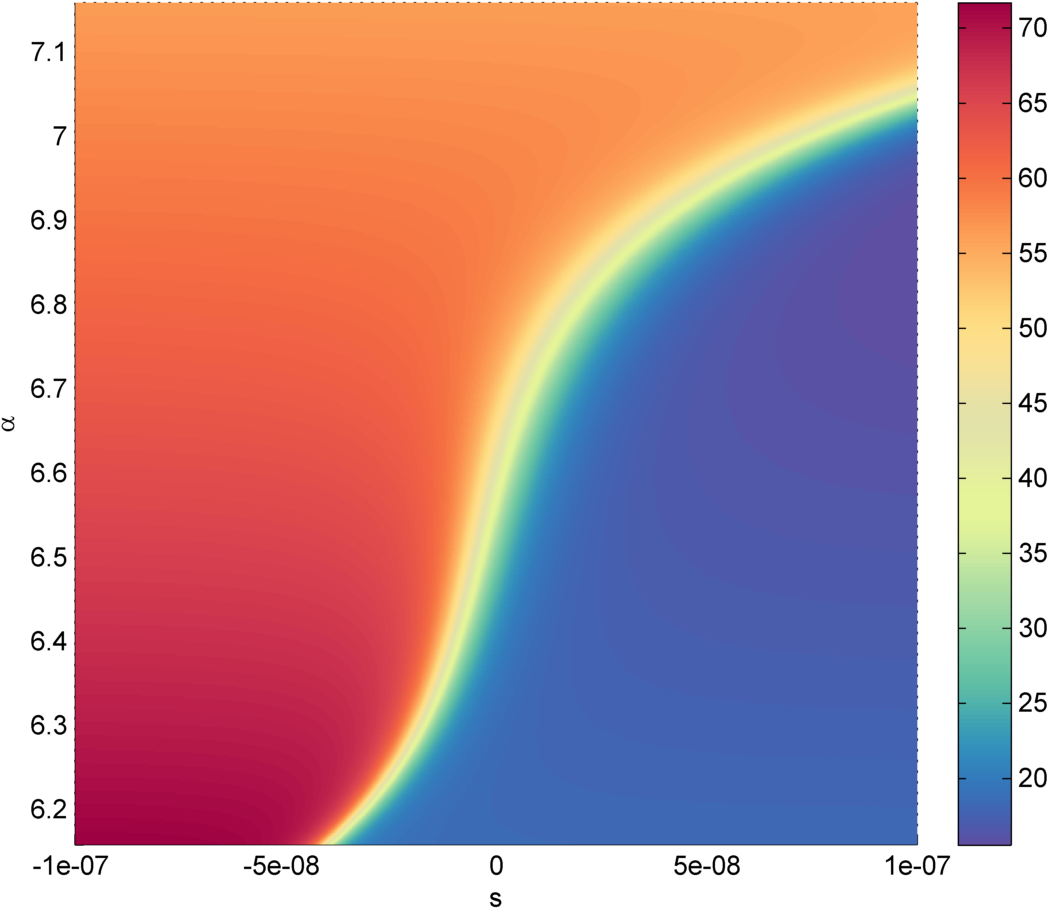}
          \end{subfigure}
                    \begin{subfigure}[b]{0.31\textwidth}
                  \centering
                  \includegraphics[width=\textwidth]{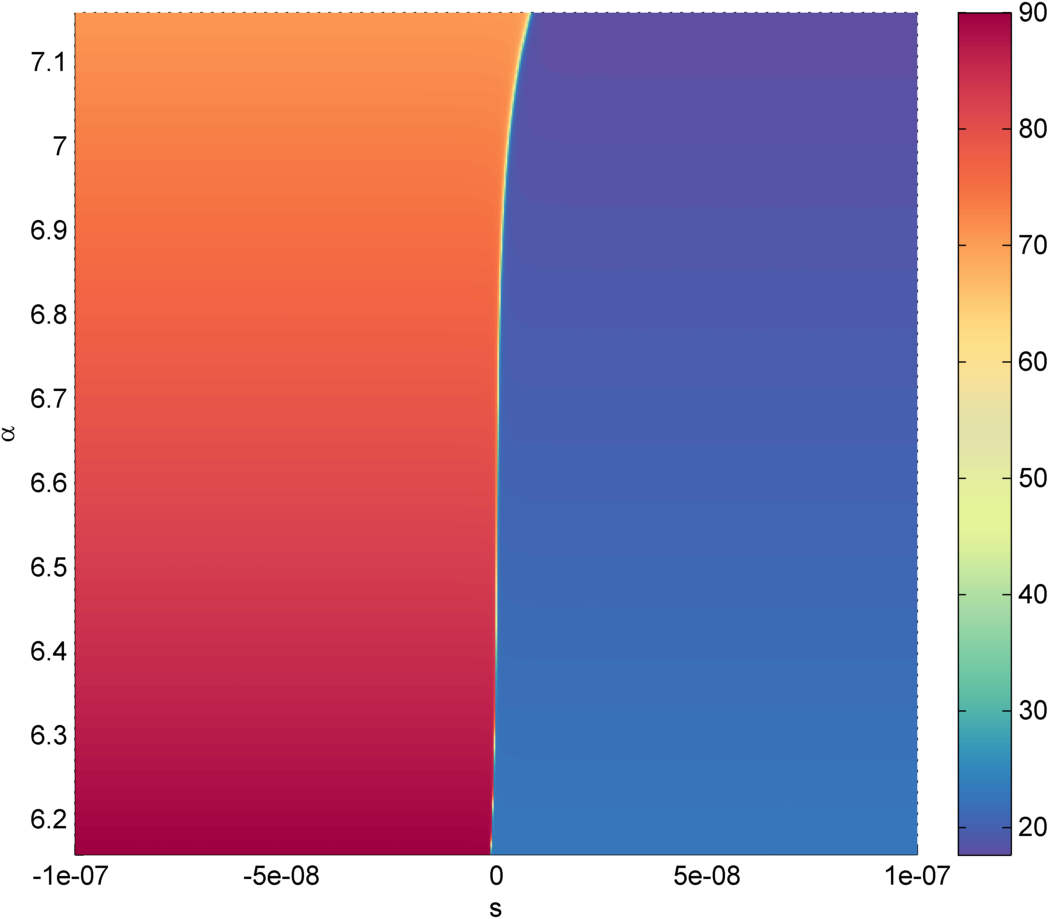}
          \end{subfigure}
    \end{subfigure}
    \caption{Phase boundaries at the $s=0, \alpha \approx 6.66$ crossover with $N_{ex} = 75,100$ and $125$. 
    }\label{fig:phaseboundaries}
\end{figure}

\begin{figure}[t]
    \centering
    \begin{subfigure}[b]{0.65\textwidth}
    \centering
    \includegraphics[width=0.95\textwidth]{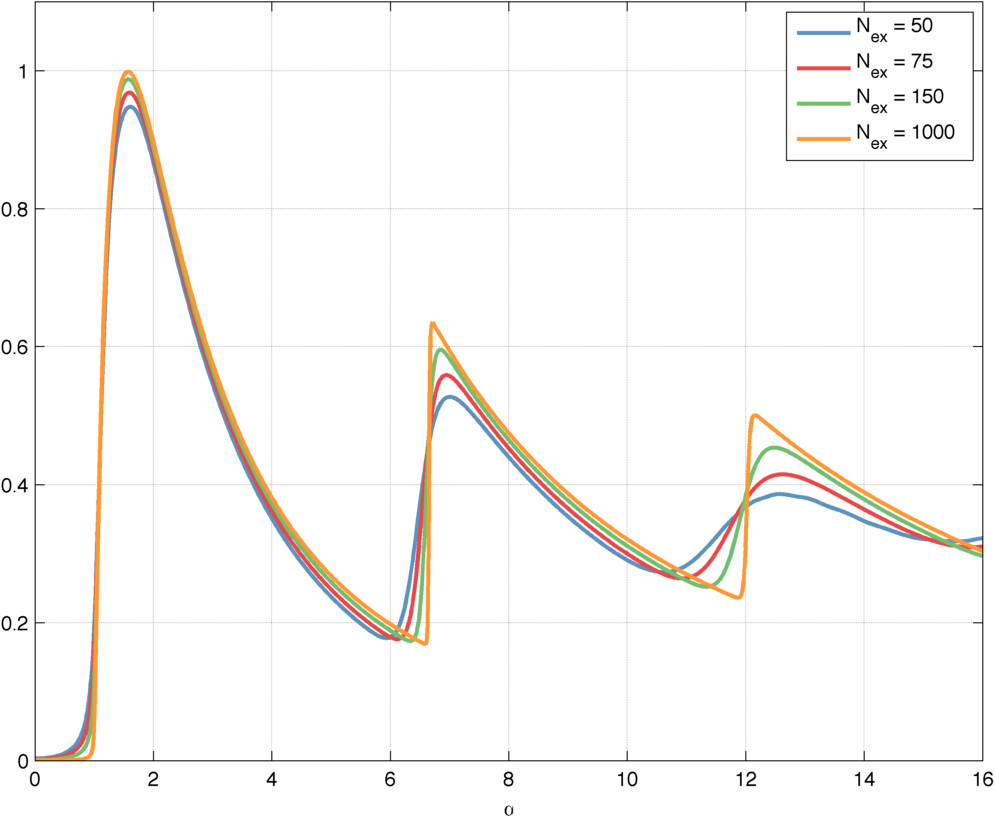}
    \end{subfigure}
    \caption{Rescaled stationary state mean photon numbers, $\langle \rho_{\text{ss}} \rangle /N_{\text{ex}}$ for increasing $N_{\text{ex}}$, showing phase transition becomes sharp as $N_{\text{ex}} \rightarrow \infty$.}\label{fig:rescaledmeans}
\end{figure}


Figure \ref{fig:phaseboundaries} shows that the phase separation lines become sharper with larger $N_{ex}$, and a "true" phase transition emerges in infinite pumping rate limit. A similar conclusion can be drawn by plotting the rescaled stationary mean 
$\langle N\rangle / N_{ex}$, cf. Figure \ref{fig:rescaledmeans}. This can be intuitively understood by appealing to the effective potential  \eqref{eq.potential}. As $N_{ex}$ increases the potential barrier becomes larger and two stable phases emerge at the point where the local minima are equal. Indeed, Figure \ref{fig:potentials} shows the plot of the rescaled potential $U/N_{ex}$ as a function of the rescaled variable $x=n/N_{ex}$, which approaches the ($N_{ex}$ independent) limit 
$$
v(x) = - \int_0^x   \log \left ( \frac{\nu}{\nu+1}+ \frac{\sin^2(\alpha \sqrt{y} )}{(\nu+1)y } \right ) dy
$$
as it can be deduced from the formulas \eqref{eq:stationary} and \eqref{eq.potential}. Therefore, in the limit of large pumping rate we deal with a particle in a fixed potential $v(x)$ at inverse temperature $1/kT= N_{ex}$. At $\alpha\approx 1$ the dependence of the mean on $N_{\text{ex}}$ switches from constant to linear behaviour as the minimum of the potential $v(x)$ moves away from zero. When the two minima are at different heights, the lower one becomes the stable and other one is metastable. Communication between the phases becomes increasingly unlikely, with probability decreasing exponentially with $N_{ex}$. When the two minima are equal, we have two stable states, and the corresponding value of $\alpha$ is the phase transition point for the mean photon number.


More information about the dynamical phase transitions may be obtained from the rest of the spectrum of the semigroup generator, and for a more in-depth treatment of these numerical aspects, we refer the reader to \cite{vanHorssen}.
\section{Conclusions and outlook}

We have studied the counting process associated to the measurement of the outgoing atoms in the atom maser, and shown that this process satisfies the large deviations principle. In particular, this means that the cross-over behaviour observed in numerical simulations is not associated with the non-analyticity of the limiting log-moment generating function, as one would expect for a genuine phase transition. The rescaled counting process $\Lambda_{t}/N_{\text ex}$ does however appear to exhibit such a transition in the limit of infinite rate $N_{\text ex}$, as argued in the previous section using the potential model, and illustrated in Figures  \ref{fig:potentials}, \ref{fig:phaseboundaries}, and \ref{fig:rescaledmeans}. 
 
As a corollary, we have showed that the counting process satisfies the central limit theorem, which can be used to develop the statistical estimation theory of local asymptotic normality \cite{Guta2012,CatanaBoutenGuta,GutaKiukas}.


The model we have investigated has the property that the stationary state is diagonal in the Fock basis and all the jump operators leave the set of diagonal states invariant. The proof of the large deviation principle for the non-Markov counting process $\Lambda_t$ relies on the quantum semigroup's restriction to the diagonal algebra, which results in a classical birth-death semigroup (when proving the strict positivity and immediate compactness of $T^{(d)}_s$ and when applying Krein-Rutman theory).  We leave for a future investigation the study of the same problem in settings where no classical reduction is possible; an example would be the atom maser where the outgoing atoms are measured in a different basis than the standard one, thus breaking the invariance of the diagonal algebra.


The compactness of the Markov semigroup makes our model tractable as it becomes essentially finite dimensional, as the bath decay dominates the absorption due to the atom interaction. An interesting problem would be to explore more general classes of infinite dimensional systems (e.g. continuous variables or infinite spin chains) where a similar phenomenon holds. Another issue is the general relation between the "static" transitions which refer to non-analytic properties of the stationary state, and dynamic transitions which characterise properties of the measurement process. As shown in \cite{LesanovskyvanHorrsenGutaGarrahan} one can construct examples where the stationary state does not change while the system undergoes a dynamical phase transition.

Finally, it would be interesting to consider a more general large deviations setup  which takes into account the correlations between the detection events rather than looking at the total number of counts.  

\vspace{4mm}

\noindent

\emph{Acknowledgements.} M.v.H. and M.G. thank J. P. Garrahan and I. Lesanovsky for numerous fruitful discussions; their research was supported by EPSRC grants no. EP/J009776/1, EP/T022140/1, and Fellowship EP/E052290/1. R.C. and F.G aknowledge the support of the INDAM GNAMPA project 2020 ``Evoluzioni markoviane quantistiche'' and of the Italian Ministry of Education, University and Research (MIUR) for the the Dipartimenti di Eccellenza Program (2018–2022)—Dept. of Mathematics ``F. Casorati'', University of Pavia.

\emph{Conflict of interest.} The authors have no conflicts to disclose.

\emph{Data availability.} Data sharing is not applicable to this article as no new data were created or analyzed in this study.

\appendix
\section{Proofs of Propositions \ref{prop:def} and \ref{prop:defdef}}
\label{app:profs.prop1&2}

\renewcommand{\theequation}{\thesection \arabic{equation}}
\setcounter{equation}{0}
\noindent
\emph{Proof of Proposition \ref{prop:def}}

1. Since $G$ is negative by \eqref{eq.def.G}, it is the generator of a strongly continuous contraction semigroup on $\mathfrak{h}$. We have
$$
D(G) \subseteq D(L_i) \; \forall \, i=1,\dots,4
$$
and 
$$
\langle Gu,u\rangle + \langle u,Gu\rangle + \sum_{i=0}^4 \langle L_i u,L_i u\rangle=0 \quad \forall u \in D(G).
$$
Consequently $\mathcal{L}$ generates a minimal sub Markov quantum dynamical semigroup by \cite[Theorem 3.21]{Fagnola1999}. In particular, $\mathcal{L}^{(0)}$ generates the sub Markov quantum dynamical semigroup that we formally denoted by $(e^{t\mathcal{L}^{(0)}})_{t\ge 0}$.

In order to prove conservativity (which implies uniqueness by \cite[Corollary 3.23]{Fagnola1999}) we can use \cite[Corollary 3.41 p.73]{Fagnola1999} with $C=\Phi=-2G$ and $D=D(G)$; we just need to check condition (3.41) of \cite[Corollary 3.41 p.73]{Fagnola1999} which is equivalent to the existence of a constant $b$ such that for every $u \in D(G)$
\begin{equation}\label{eq.condition.G}
-4\langle Gu,Gu\rangle + \sum_{i=0}^4 \langle \sqrt{-2G} L_i u,\sqrt{-2G} L_i u\rangle 
\le b \|\sqrt{-2G}u\|^2
\end{equation}
Let $S$ be the shift operator, which acts as $S |e_n \rangle=|e_{n+1} \rangle$ and whose adjoint is $S^*|e_n \rangle=\delta_{n\ge 1}|e_{n-1} \rangle$; notice that $a=S^* \sqrt{N}$ and $a^*=\sqrt{N} S$. We recall some relations which are used in the following computation:
$$
aa^* =N+1,
\qquad
a^* a=N
$$
$$
Sf(N)| e_n \rangle = f(n) |e_{n+1}\rangle = f(N-1) S |e_n\rangle
$$
for any suitable function $f$ of the number operator. We have
\begin{eqnarray*}
-4G^2 - 2 \sum_{i=0}^4  L_i^*G L_i  
&=& -(N_{\rm ex} +\nu+ N (2 \nu +1))^2\\
&&+ {N_{\rm ex}}  \sin (\phi \sqrt{N+1}) S^* (N_{\rm ex} +\nu+ N (2 \nu +1)) S\sin (\phi \sqrt{N+1})\\
&&+ {N_{\rm ex}}  \cos^2 (\phi \sqrt{N+1}) (N_{\rm ex} +\nu+ N (2 \nu +1))\\
&&+ {(\nu+1)} a^* (N_{\rm ex} +\nu+ N (2 \nu +1)) a\\
&&+ {\nu} a (N_{\rm ex} +\nu+ N (2 \nu +1)) a^*\\
&=& -(N_{\rm ex} +\nu+ N (2 \nu +1))^2\\
&&+ {N_{\rm ex}}  \sin^2 (\phi \sqrt{N+1})  (N_{\rm ex} +\nu+ (N+1) (2 \nu +1)) \\
&&+ {N_{\rm ex}}  \cos^2 (\phi \sqrt{N+1}) (N_{\rm ex} +\nu+ N (2 \nu +1))\\
&&+ {(\nu+1)}N (N_{\rm ex} +\nu+ (N-1) (2 \nu +1)) \\
&&+ {\nu}  (N_{\rm ex} +\nu+ (N+1) (2 \nu +1)) (N+1)\\
&=& 0 \cdot N^2 -(2\nu +1) N + B \end{eqnarray*}
where $B$ is a bounded operator. Since $-2G= (N_{\rm ex} +\nu) + (2\nu+1) N$, we obtain \eqref{eq.condition.G}.

Equation (\ref{eq:integral1}) follows from \cite[Propositions 3.18]{Fagnola1999}; by the definition of minimal quantum dynamical semigroup given by Fagnola-Chebotarev we have that $\mathcal{T}$ is approximated in the pointwise $w^*$-topology by the following maps: for every $X \in B(\mathfrak{h})$, $u,v \in D(N)$
\be
\begin{split} \label{eq:iteration}
&\langle u,\mathcal{T}^{(0)}(t)(X) v \rangle= \langle e^{tG}u, X e^{tG}v\rangle,\\
&\langle u,\mathcal{T}^{(n+1)}(t)(X) v \rangle=\langle e^{tG}u, X e^{tG}v\rangle+ \sum_{i=1}^4 \int_0^t \langle L_i e^{sG}u, \mathcal{T}(t-s)^{(n)}(X)L_i e^{sG}v \rangle ds.
\end{split}
\ee
For every $n \geq 0$, equation (\ref{eq:iteration}) extends uniquely to a bounded bilinear form represented by a bounded operator $\mathcal{T}^{(n)}(t)(X)$. In our case for every $u \in \mathfrak{h}$ and $s>0$, $e^{sG}u \in D(N^\infty):=\bigcap_{n=1}^\infty D(N^n)$ and $L_i(D(N^\infty))\subset D(N^\infty)$, hence the expressions in equation (\ref{eq:iteration}) make sense for every $u,v \in \mathfrak{h}$ (the function which is inside the integral is well defined for every $u,v \in \mathfrak{h}$ unless when $s=0$, which is a set with zero Lebesgue measure). Moreover we can get by recursion the following explicit expression for $\langle u,\mathcal{T}^{(n+1)}(t)(X) v \rangle$, $u,v \in \mathfrak{h}$:
\be \label{eq:bilinear}
\begin{split}
\langle u,\mathcal{T}^{(n+1)}(t)&(X) v \rangle=\langle e^{tG}u, X e^{tG }v \rangle +\\
&\sum_{k=1}^n\sum_{i_1,\dots, i_n=1}^4 \int_{0\le t_1\le\cdots \le t_k\le t}f(t,u,v,X;t_1,i_1,\dots, t_k,i_k) dt_1\dots dt_k\\
\end{split}
\ee
where
\[\begin{split}
&f(t,u,v,X;t_1,i_1,\dots, t_k,i_k)=\\
&=\langle e^{G(t-t_k)}L_{i_k}\cdots  e^{G(t_2-t_1)}L_{i_1}e^{G t_1}u, X e^{G(t-t_k)}L_{i_k}\cdots  e^{G(t_2-t_1)}L_{i_1}e^{G  t_1}v \rangle.\\
\end{split}
\]
Since for every $t\geq 0$, $\mathcal{T}^{(n)}_t(X)$ converges to $\mathcal{T}_t(X)$ in the $w^*$-topology (monotonically for positive $X$), we conclude.
\\
2.-3. By restricting $\mathcal{T}_t$ to diagonal operators in the Fock basis, we obtain the semigroup of a birth-death process (see point (iii) 
in Section \ref{sec.main} for the details). We can then apply standard arguments (\cite[Section 5.4]{An91}) to see that $\rss$ is a faithful normal invariant state. This implies that the semigroup is positive recurrent (\cite{Frigerio1982}) and, consequently, the fixed points ${\cal F}({\cal T})$ are contained in the decoherence-free algebra ${\cal N}({\cal T})$. Moreover, thanks to \cite[Theorem 3.2]{Dhahri2010}, ${\cal N}({\cal T})$ is trivial (i.e. ${\cal N}({\cal T})=\mathbb{C} \mathbf{1}$). Then $\mathbf{1}$ is the unique harmonic projection and so $\rss$ is the unique invariant state; finally the semigroup is ergodic by \cite[Theorem 3.3]{Frigerio1982}.
\qed

\noindent
\emph{Proof of Proposition \ref{prop:defdef}}

1. and 2. We fix $s\in \mathbb{R}$ and we drop the index $s$ for the rest of the proof. We can proceed similarly as in the proof of Proposition \ref{prop:def}, and  use again \cite[Theorem 3.22]{Fagnola1999} choosing
\begin{eqnarray*}
L'_1&=&e^{s/2} L_1
\\
L_k'&=& L_k \qquad \mbox{for } k=2,3,4,
\\
 - 2G'&=& \sum_{k=1}^4 L_k^*L_k + (e^s-1)1_{s>0} \left ( 1+ \frac{1}{\nu} \right ) N_{ex} {\mathbf 1}= \\
&=&-2G+ (e^s-1)1_{s>0} \left ( 1+ \frac{1}{\nu} \right ) N_{{\rm ex}} {\mathbf 1}
\end{eqnarray*}
For the purposes of this proof, it would be enough to define $-2G^\prime$ as $-2G+(e^s-1)1_{s>0}N_{ex}{\mathbf 1}$; the extra factor $\left (1 + \frac{1}{\nu} \right)$ is required in order to keep the same notation in the proof of Lemma \ref{lem:l2ext}.

Notice that
\\
- $G'$ generates a strongly continuous contraction semigroup on $\mathfrak{h}$;
\\
- for all $u\in D(G')= D(N)$,
$$
\langle G'u, u\rangle + \langle u,G' u\rangle 
+ \sum_{k=1}^4 \langle L_k'u, L_k' u\rangle \le 0.
$$
\\
Then the operator $\mathcal{L}^\prime$, defined as
$$
\mathcal{L}^\prime(X) = \{G^\prime,X\} + \sum_{k=1}^4 {\mathcal J}_k^\prime(X)=\mathcal{L}^{(s)}(X)+ (1-e^s)1_{s>0} \left ( 1+ \frac{1}{\nu} \right ) N_{{\rm ex}} X
$$
generates a (sub-markovian) quantum dynamical semigroup $\mathcal{T}^\prime$ which is the minimal solution to
\begin{equation}\label{eq:integral}
\langle u,\mathcal{T}^\prime(t)(X) v \rangle=\langle e^{tG^\prime}u, X e^{tG^\prime}v\rangle+ \sum_{i=1}^4 \int_0^t \langle L^\prime_i e^{rG^\prime}u, \mathcal{T}^\prime_{t-r}(X)L^\prime_i e^{rG^\prime}v \rangle dr
\end{equation}
for every $u,v \in D(G^\prime)=D(G)$ and which is approximated by the sequence of maps
\[\begin{split}
&\langle u, \mathcal{T}^{(0)\prime }(t)(X) v \rangle= \langle e^{tG^\prime}u, X e^{tG^\prime}v\rangle,\\
&\langle u,\mathcal{T}^{(n+1) \prime}(t)(X) v \rangle=\langle e^{tG^\prime}u, X e^{tG^\prime}v\rangle+ \sum_{i=1}^4 \int_0^t \langle L^\prime_i e^{rG^\prime}u, \mathcal{T}^{(n) \prime}_{t-r}(X)L^\prime_i e^{rG^\prime}v \rangle dr.
\end{split}
\]
for $u,v \in \mathfrak{h}$. Notice that when $s\le 0$, $G=G^\prime$ hence we can take $\mathcal{T}_s:=\mathcal{T}^\prime$; while, when $s>0$, we can take $\mathcal{T}_s(t)=e^{t (e^s-1)\left ( 1+ \frac{1}{\nu} \right )N_{{\rm ex}}} \mathcal{T}^\prime(t)$, so again the conclusion immediately follows.\\
Let us prove the uniqueness of the solution to equation (\ref{eq:defsol}) and that its $w^*$-infinitesimal generator  $(\mathcal{L}_s,D(\mathcal{L}_s))$ is equal to $(\mathcal{L}+ (e^s-1){\mathcal J}_1,D(\mathcal{L}))$. Let us consider $\tilde{\mathcal{T}}$ the semigroup generated by $(\mathcal{L}_s+{\cal C},D(\mathcal{L}_s))$ (see Theorem \ref{thm:OB} in Appendix), where ${\cal C}$ is the bounded completely positive linear map defined as
\begin{equation}
{\cal C}(X) =(1-e^s){\mathcal J}_1(X)+1_{s>0}N_{{\rm ex}}(e^s-1)X,
\qquad X\in B(\mathfrak{h}).
\end{equation}
Because of Corollary \ref{coro:OB} in the Appendix, $\tilde{\mathcal{T}}$ is again a $w^*$-continuous quantum dynamical semigroup. Notice that Equation (\ref{eq:OB}) implies
\begin{equation} \label{eq:OB2}
\tilde{\mathcal{T}}(t)(X) = \tilde{\mathcal{T}}_s(t)(X) + \int_0^t
\tilde{\mathcal{T}}(t-r) {\cal C}\tilde{\mathcal{T}}(r)(X) dr
\end{equation}
and differentiating Equation (\ref{eq:OB2}) multiplied by $e^{t N_{{\rm ex}}(1-e^s) 1_{\{s>0\}}}$ we obtain that $e^{t N_{{\rm ex}}(1-e^s)1_{\{s>0\}}}\tilde{\mathcal{T}}$ solves for
\begin{equation} \label{eq:uniq}
\frac{d}{dt}\langle u, e^{t N_{{\rm ex}}(1-e^s)1_{\{s>0\}}}\tilde{\mathcal{T}}(t)(X) v \rangle =\mathcal{L}_*(\ket{v}\bra{u})(e^{t N_{{\rm ex}}(1-e^s)1_{\{s>0\}}}\tilde{\mathcal{T}}(t)(X)))
\end{equation}
for $u,v \in D(G)$. Since equation (\ref{eq:uniq}) admits a unique $w^*$-continuous positive solution (see \cite[Corollary 3.23]{Fagnola1999}), $e^{t N_{{\rm ex}}(1-e^s)1_{\{s>0\}}}\tilde{\mathcal{T}}$ must coincide with $\mathcal{T}$ and so do their $w^*$-infinitesimal generators, which are respectively $(\mathcal{L}_s+(1-e^s){\cal J}_1,D(\mathcal{L}_s))$ and $(\mathcal{L},D(\mathcal{L}))$. Hence we get that $(\mathcal{L}_s,D(\mathcal{L}_s))$ is equal to $(\mathcal{L}+ (e^s-1){\mathcal J}_1,D(\mathcal{L}))$ and the solution of Equation (\ref{eq:defsol}) is indeed unique.

3. By the previous arguments, we have, following once again the same line of the proof of Proposition \ref{prop:def}, the integral representation for the semigroup $\mathcal{T}$. When $s>0$, we just have to introduce the correcting multiplicative term.

\qed

\section{Additional results}
\label{appendix:other.restults}
\setcounter{equation}{0}

\begin{thm}[[Theorem 3.1.33, p. 191, \cite{Bratteli1979}] \label{thm:OB}
Let $\mathcal{S}$ be the generator of a $\sigma$-continuous semigroup $(\mathcal{P}(t))_{t \ge 0}$, with $\sigma$ equal to either the weak or the weak$^*$ topology.
If ${\cal C}$ is a bounded and $\sigma-\sigma$-continuous, then $({\cal S}+{\cal C})$ generates a $\sigma$-continuous semigroup ${\cal P}^{({\cal S}+{\cal C})}$ of bounded operators and for every $t \ge 0$, $X \in B(\mathfrak{h})$
\begin{equation} \label{eq:OB}
\begin{split}
&{\cal P}^{({\cal S}+{\cal C})}(t)(X) = {\cal P}(t)(x) + \\
&\sum_{k\ge 1} \int_{0\le t_1\le\cdots \le t_k \le t}
{\cal P}(t_1) {\cal C}{\cal P}(t_2-t_1){\cal C} \cdots {\cal P}(t_k-t_{k-1}){\cal C}{\cal P}(t-t_k)(X) dt_1\cdots dt_k\\
\end{split}
\end{equation}
The integrals define a series of bounded operators that converges in norm; the integrals are defined in the norm topology when $\sigma=w$ and in the weak$^*$ topology when $\sigma=w^*$.
\end{thm}

\begin{col} \label{coro:OB}
In the conditions of the previous theorem, if we additionally suppose that both ${\cal C}$ and the the maps ${\cal P}(t)$ are completely positive, then also the perturbed semigroup  ${\cal P}^{({\cal S}+{\cal C})}$ is completely positive.
\end{col}
\begin{proof}
Recall that the composition of two completely positive (c.p.) maps is still c.p. Then all the integrands in the integral form before are c.p..
Now remember another equivalent characterization of c.p.: $\Phi$ on $B(\mathfrak{h})$ is c.p. iff, for any $n\in \mathbb{N}$ and for any $X_1,...X_n$ and $Y_1,...Y_n$ bounded operators
$$
\sum_{i,j=1}^n X_i^*\Phi(Y_i^*Y_j)X_j \ge 0.
$$
Then it is immediate to see that 
\[\begin{split}
&\sum_{i,j=1}^n X_i^* {\cal P}^{({\cal S}+{\cal C})}(t)(Y_i^*Y_j)X_j = \sum_{i,j=1}^n X_i^* {\cal P}(t)(Y_i^*Y_j)X_j +\\
&\sum_{k\ge 1} \int_{0\le t_1\le\cdots \le t_k \le t}
\sum_{i,j=1}^n X_i^*{\cal P}(t_1) {\cal C}{\cal P}(t_2-t_1) \cdots {\cal C}{\cal P}(t-t_k)(Y_i^*Y_j)X_j dt_1\cdots dt_k \ge 0.\\
\end{split}
\]
\end{proof}

\begin{prop} \label{prop:core}
Let $B$ be a Banach space, $(S(t))_{t \ge 0}$ a strongly continuous semigroup with generator $(A,D(A))$. If ${\cal M}\subset B$ is such that
\begin{enumerate}
\item ${\cal M}$ is dense in $B$;
\item ${\cal M}$ is a set of analytic vectors for $A$, that is ${\cal M} \subset \bigcap_{n \ge 0}D(A^n)$ and for every $X \in {\cal M}$
\[\sum_{n=0}^{+\infty} \frac{z^n}{n!} \|A^n(X)\|
\]
has a positive radius of convergence;
\item $A({\cal M}) \subset {\cal M}$.
\end{enumerate}
Then ${\cal M}$ is a core for $A$.
\end{prop}
\begin{proof}
By the definition of core, we need to prove that ${\cal M}$ is dense in $D(A)$ with respect to the graph norm $\lVert X \rVert_{A}:=\lVert X \rVert + \lVert {A}(X) \rVert$. Fix $X \in D(A)$, $\epsilon >0$. We shall consider successive approximations of $X$.
\paragraph{Step 1}
\[\lim_{t \rightarrow 0} \Big\lVert X-\frac{1}{t} \int_0^t S(u)(X) du \Big\rVert_{A} =0.
\]
Notice that, due to the strong continuity of $S$,
\[\lim_{t \rightarrow 0} \Big\lVert X-\frac{1}{t} \int_0^t S(u)(X) du \Big\rVert =0
\]
is trivial and that 
\[\lim_{t \rightarrow 0} \Big\lVert {A}(X)-\frac{1}{t} {A}\int_0^t S(u)(X) du \Big\rVert=\lim_{t \rightarrow 0} \Big\lVert {A}(X)-\frac{S(t)(X)-X}{t} \Big\rVert=0
\]
follows from the definition of infinitesimal generator. Therefore there exists $\tilde{t}>0$ such that
\[\Big\lVert X-\frac{1}{\tilde{t}} \int_0^{\tilde{t}} S(u)(X) du \Big\rVert_{A} \leq \epsilon.
\] 
\paragraph{Step 2} There exists $Y \in {\cal M}$ such that
\[\Big\lVert \frac{1}{\tilde{t}} \int_0^{\tilde{t}}S(u)(X)du -  \frac{1}{\tilde{t}} \int_0^{\tilde{t}} S(u)(Y)du \Big\rVert_{A} <\epsilon.
\]
Indeed, ${\cal M}$ is dense in $B$, hence we can find $Y \in {\cal M}$ such that $\Big\lVert X-Y \Big\rVert < \epsilon \tilde{t}/4$; we have
\[\Big\lVert \frac{1}{\tilde{t}} \int_0^{\tilde{t}} S(u)(X)du -  \frac{1}{\tilde{t}} \int_0^{\tilde{t}} S(u)(Y)du \Big\rVert \leq \frac{1}{\tilde{t}} \int_0^{\tilde{t}} \Big\lVert S(u)(X-Y) \Big\rVert du \leq \epsilon/2
\]
and
\[\Big\lVert \frac{1}{\tilde{t}}{A} \int_0^{\tilde{t}} S(u)(X)du -  \frac{1}{\tilde{t}} {A}\int_0^{\tilde{t}} S(u)(Y)du \Big\rVert=\frac{1}{\tilde{t}} \Big\lVert S_{\tilde{t}}(X-Y)+(X-Y) \Big\rVert \leq \epsilon/2
\]
and, summing up, we conclude.
\paragraph{Step 3} If we assume that $\overline{t}$ is small enough, there exists $N>0$ such that
\[\Big\lVert \frac{1}{\tilde{t}}\int_0^{\tilde{t}} S(u)(Y)du - \frac{1}{\tilde{t}}\int_0^{\tilde{t}}\sum_{k=0}^N \frac{u^k {A}^k(Y)}{k!} du\Big\rVert <\epsilon/2
\]
and, since ${A}(Y) \in {\cal M}$ too,
\[\Big\lVert  \frac{1}{\tilde{t}}\int_0^{\tilde{t}} S(u)({A}(Y))du - \frac{1}{\tilde{t}}\int_0^{\tilde{t}} \sum_{k=0}^N \frac{u^k {A}^k({A}(y))}{k!}du \Big\rVert <\epsilon/2.
\]
Hence
\[\Big\lVert \frac{1}{\tilde{t}}\int_0^{\tilde{t}} S(u)(Y)du - \frac{1}{\tilde{t}}\int_0^{\tilde{t}}\sum_{k=0}^N \frac{u^k {A}^k(Y)}{k!} du\Big\rVert_{A} <\epsilon/2
\]
and notice that
$$
\frac{1}{\tilde{t}}\int_0^{\tilde{t}}\sum_{k=0}^N \frac{u^k A^k(Y)}{k!} du= \sum_{k=0}^N\left (\frac{1}{\tilde{t}}\int_0^{\tilde{t}} \frac{u^k }{k!}du\right)A^k(Y)\in {\cal M}.
$$
Hence we have shown that for every $X \in D({A})$ and for every $\epsilon >0$, there exists an element in ${\cal M}$ which is far from $X$ less than $4\epsilon$ with respect to $\lVert \cdot \rVert_{A}$ and we are done.
\end{proof}

\bibliography{bibliography}
\bibliographystyle{utphys}

\end{document}